\newtheorem{theorem}{Theorem}[section]
\newtheorem{prop}{Proposition}[section]
\newtheorem{cor}[theorem]{Corollary}
\newtheorem{defn}[theorem]{Definition}
\newcommand{\bgd}{\begin{eqnarray}}
\newcommand{\edd}{\end{eqnarray}}
\newcommand{\hf}{\hspace*{0.5cm}}
\newcommand{\pin}{\par\noindent}
\begin{document}

\title{Unveiling topological order through multipartite entanglement}

\author{Siddhartha Patra$^{1}$}
\email{sp14ip022@iiserkol.ac.in}
\author{Somnath Basu$^{2}$}
\email{somnath.basu@iiserkol.ac.in}
\author{Siddhartha Lal$^{1}$}
\email{slal@iiserkol.ac.in}

\affiliation{$^{1}$Department of Physical Sciences, Indian Institute of Science Education and Research-Kolkata, W.B. 741246, India}
\affiliation{$^{2}$Department of Mathematics $\&$ Statistics, Indian Institute of Science Education and Research-Kolkata, W.B. 741246, India}

\date{\today}

\begin{abstract}
\noindent 
It is well known that the topological entanglement entropy ($S_{topo}$) of a topologically ordered ground state in 2 spatial dimensions can be captured efficiently by measuring the tripartite quantum information ($I^{3}$) of a specific annular arrangement of three subsystems. However, the nature of the general N-partite information ($I^{N}$) and quantum correlation of a topologically ordered ground state remains unknown. In this work, we study such $I^N$ measure and its nontrivial dependence on the arrangement of $N$ subsystems. For the collection of subsystems (CSS) forming a closed annular structure, the $I^{N}$ measure ($N\geq 3$) is a topological invariant equal to the product of $S_{topo}$ and the Euler characteristic of the CSS embedded on a planar manifold, $|I^{N}|=\chi S_{topo}$. Importantly, we establish that $I^{N}$ is robust against several deformations of the annular CSS, such as the addition of holes within individual subsystems and handles between nearest-neighbour subsystems.
While the addition of a handle between further neighbour subsystems causes $I^{N}$ to vanish, the multipartite information measures of the two smaller annular CSS emergent from this deformation again yield the same topological invariant. For a general CSS with multiple holes ($n_{h}>1$), we find that the sum of the distinct,  multipartite informations measured on the annular CSS around those holes is given by the product of $S_{topo}$, $\chi$ and $n_{h}$, $\sum^{n_{h}}_{\mu_{i}=1}|I^{N_{\mu_{i}}}_{\mu_{i}}| = n_{h}\chi S_{topo}$. This constrains the concomitant measurement of several multipartite informations on any complicated CSS. The $N^{th}$ order irreducible quantum correlations for an annular CSS of $N$ subsystems is also found to be bounded from above by $|I^{N}|$, which shows the presence of correlations among subsystems arranged in the form of closed loops of all sizes. Thus, our results offer important insight into the nature of the many-particle entanglement and correlations within a topologically ordered state of matter.
\end{abstract}

\maketitle

\section{Introduction}

\par\noindent Topologically ordered \cite{Wen_2013} phases are characterized by nonlocal ordering of matter, and cannot be described by the Ginzburg Landau Wilson paradigm of symmetry broken orders \cite{Wen_2004}. These phases show exotic phenomena like a nontrivial ground state degeneracy observed on a multiply connected spatial manifold
\cite{Tao_Wu_1984,Niu_Wu_1985,Wen_1989,Wen_Zee_1990,Wen_Niu_1990,
Wen_1991_gdeg,Wen_1990}    
and fractional statistics of excitations  \cite{Laughlin_1983,Kitaev_2003,Read_Sachdev_1991,
Wen_1991,Senthil_Matthew_2000,Moessner_Sondhi_2001}. Due to their added topological protection, understanding these phases can lead to novel applications including fault-tolerant quantum computing    \cite{Kitaev_Kong_2012,Sarma_Freedman_2015}. The ground states of several topologically ordered phases have been shown to possess signatures of string-net condensation with long-range entanglement \cite{Levin_Wen_2005,Hamma_Lidar_2008,Lan_Wen_2014,
Fendley_Troyer_2013,Gu_Wen_2009,Slagle_Willianson_2019}. This  refers to the fact that the application of any finite number of local unitary operations on such ground states cannot convert them into direct product form~\cite{chen2010local,gu2009,zeng2019}. Several investigations on the nature of quantum entanglement of various topological phases have been conducted \cite{Blok_Wen_1990,Read_1990,Rokhsar_Kivelson_1988,
Read_Chakraborty_1989,Moessner_Sondhi_2001,Ardonne_Fradkin_2004}. Specifically, the study of the von Neumann entanglement entropy of a singly connected subregion partitioned from a topologically ordered ground state reveals the existence of a geometry independent term that depends on the degeneracy of the ground state manifold. This term is universal in the sense that it is a topological invariant, and is referred to as the topological entanglement entropy (TEE) \cite{Hamma_Zanardi_2005,Levin_Wen_2006,Kitaev_Preskill_2006,Zhang_Vishwanath_2012,
Grover_Vishwanath_2011}. Analysis of the entanglement spectrum of these states has revealed nontrivial degeneracy, a gap that vanishes at topological phase transitions and the existence of gapless edge states \cite{Li_Haldane_2008,Qi_Andreas_2012,Pollmann_Oshikawa_2010,Yao_Qi_2010,LiuFan2011,
CalabreseLefevre2008,LaunchilHaque2010,Schliemann2011,Halperin1982,QiLudwig2012,Laflorencie2016}.

\par\noindent We now discuss the TEE in further detail. It was shown in Refs.\cite{Hamma_Zanardi_2005,Levin_Wen_2006,Kitaev_Preskill_2006} that the entanglement entropy ($S_{A}$) for a subsystem $A$ obtained from a real-space bipartitioning of a  state with nontrivial topological order in 2 spatial dimensions follows the area law \cite{Eisert_Plenio_2010} with a correction term ($\gamma_{A}$), $S_A = \alpha L_A -\gamma_A + ~.~.~.~$ The terms represented by the ellipsis vanish for large subsystem size. Further, it is known that the term $\gamma_A$ is universal: in the simplest setting, it depends on a topological quantum number of the ground state manifold called the quantum dimension ($\mathcal{D}$), and the topology of the subsystem (i.e., the number of disjoint boundary components of the subsystem $A$). The quantum dimension governs the rate of growth of the topologically protected ground state Hilbert space on manifolds with a nontrivial genus. As an example, for the case of the topologically ordered abelian fractional quantum Hall fluids, the quantum dimension $\mathcal{D}=\sqrt{|\text{det} K|}$, where $K$ is the $K$-matrix describing the topological Chern-Simons quantum field theory for these phases. The quantity $|\text{det} K|$ is a count of the number of degenerate ground states on a torus. 
\par\noindent
In general, the von Neumann entanglement measure captures both local as well as nonlocal quantum correlations. Thus, it was shown in Refs.\cite{Levin_Wen_2006,Kitaev_Preskill_2006} that in order to find a purely topological piece of the entanglement entropy, one has to properly choose an arrangement of the subsystems under consideration as well as the corresponding entanglement measure. Specifically, Ref.\cite{Kitaev_Preskill_2006,Levin_Wen_2006} showed that the tripartite information, $I_3(A,B,C)=S_{A}+S_{B}+S_{C}-(S_{AB}+S_{BC}+S_{CA})+S_{ABC}$ is proportional to the TEE ($S_{topo}\equiv\log \mathcal{D}$) for a very specific annular arrangement of three subsystems $A, B$ and $C$: $I_3(A,B,C) = - 2S_{topo}$. Importantly, the $I_3(A, B, C)$ for this particular collection of subsystems (CSS) is defined such that all geometry (or boundary length) dependent terms exactly cancel one another. It has also been shown in Ref.\cite{Hamma_Zanardi_2005} that the entanglement entropy of a singly connected subsystem $A$ is given by $S(A)=(n_A-C_A)\log\mathcal{D}$, where $n_A$ and $C_A$ are the number of links on the perimeter and the number of disjoint boundary components respectively of subsystem $A$. Recent investigations \cite{Liu_Zhou_2016,Kato_Murao_2016} also conclude that the tripartite information measure employed in Refs.\cite{Kitaev_Preskill_2006,Levin_Wen_2006} provide evidence for the presence of nonlocal quantum correlations in topologically ordered ground states in the form of an entanglement Hamiltonian that has tripartite irreducible correlations.
\par\noindent However, certain key questions remain unanswered. First, what is the precise dependence of the tripartite information measurement protocol proposed in Refs.\cite{Kitaev_Preskill_2006,Levin_Wen_2006} on the topology of the collection of subsystems (CSS)? How robust are such measurements against deformations of the CSS topology? Further, given that a topologically ordered system possesses truly long-ranged entanglement~\cite{chen2010local,gu2009,zeng2019}, can the multipartite quantum information be generalised beyond the choice of three subsystems so as to capture unambiguously the TEE? If this can be achieved, what insight does it offer on the nature of multipartite quantum correlations encoded within a topologically ordered state? Answering these questions is the main goal of our work. We summarise our main results below, as well as present a plan of the work.

\subsection{Summary of main results}
\par\noindent In Section \ref{section:IN_definition}, we define a multipartite information measure ($I^N_{\{\mathcal{A}_N\}}$) for a CSS defined by $\{\mathcal{A}_N\}$ (with $N$ number of subsystems in it). $I^N_{\{\mathcal{A}_N\}}$ is a generalization of the tripartite information used to compute the TEE in Refs.\cite{Kitaev_Preskill_2006,Levin_Wen_2006}, and we show that $I^N_{\{\mathcal{A}_N\}}$ is independent of CSS geometry. We then show in Section \ref{section:I_N_computation} that $I^N_{\{\mathcal{A}_N\}}$ is a topological invariant, depending only on the ground state quantum dimension and the Euler characteristic ($\chi$) of the CSS embedded on the underlying planar spatial manifold. Note that $\chi$ is also the classical Euler characteristic of the underlying compactified planar manifold $\mathbb{R}^2$. Specifically, in subsection \ref{bm:simple_annular}, we show that for an annular arrangement of $N\geq 3$ subsystems, $|I^N|=\chi S_{topo}$ (eq.\eqref{eq:IN_annular}).
In the remainder of Section \ref{section:I_N_computation}, we test the robustness of this result against various kinds of deformations of the annular CSS. For instance, in subsection \ref{section:multiple_disconnected_components}, we show that neither the addition of self-loops and holes within subsystems, nor the addition of handles between neighbouring subsystems (subsection \ref{subsec:nnhandles}), changes the result $|I^N|=\chi S_{topo}$. Further, in subsection \ref{section:morethan_nearest_neighbour}, we show that while adding handles between subsystems that are not neighbours causes $I^{N}$ to vanish, the multipartite informations of several smaller annular CSS becomes non-zero. These results are summarised pictorially in Fig.\ref{fig:summary}. Thus, these results establish that the nontrivial topology of an annular CSS is essential for a multipartite information to capture the TEE. Further, it appears very generally possible to identify an annular CSS configuration that is appropriate for such a measurement. 
\par\noindent 
In Section \ref{section:measurement_constraint}, we demonstrate the constraint that governs various multipartite information measurements that can be made in a CSS with $n_{h}$ number of holes (and where a given multipartite information is computed around one of the holes). For instance, we find that the sum of the two multipartite informations of a CSS with $n_{h}=2$, where each is computed individually around one the two holes, adds up to a constant which depends on the product of $S_{topo}$, $n_{h}$ and $\chi$ (the Euler characteristic of the CSS embedded on the underlying  planar manifold). We note that a similar constraint was obtained in Ref.\cite{Zhang_Vishwanath_2012} for a CSS with $n_{h}=2$ placed on the toroidal manifold, and was viewed as an uncertainty relation. We have generalised the constraint to the case of a CSS with $n_{h}\in\mathbb{Z}$ number of holes (eq.\eqref{eq:alltopo}).
 
\par\noindent Finally, in Section \ref{section:TEE_irreducible_correlations}, we study the irreducible quantum correlation content \cite{Liden_Wootters_2002,Zhou_2008,Kim_2021,Zhou_You_2006} encoded within the multipartite information measure of an annular CSS of $N$ subsystems. For a $N$-partite state, the $k-$party irreducible quantum correlation measures that part of the total not arising from any order of correlations less than $k$. We obtain a generalisation of the 3-subsystem strong sub-additivity relation to the case of $N$ subsystems of a topologically ordered state within an annular configuration (eq.\eqref{eq:strong_sub_add_main}). Using this inequality, we show that the $N$-party irreducible correlation is bounded from above by $S_{topo}$ for an annular CSS of $N$ subsystems (eq.\eqref{eq:allcorr}). This generalizes the previous result for the 3-party irreducible correlation \cite{Liu_Zhou_2016,Kato_Murao_2016}. These results demonstrate the presence of $N$-party quantum correlations among the subsystems of an annular CSS, and confirms the existence of closed annular structures of all sizes within a topologically ordered ground state. We conclude with a discussion of our results in Section \ref{section:discussion}. Detailed derivations of several key results are presented in the appendices.

\section{Multipartite information: definition}
\label{section:IN_definition}

\noindent Topologically ordered systems contain nonlocal entanglement and correlations. Thus, identifying nonlocal operators and measures of entanglement is important in their classification. Importantly, in the case of zero correlation length,  the entanglement entropy ($S_{\mathcal{R}}$) of a region $\mathcal{R}$ of a topologically ordered ground state depends on the number of disconnected components ($j$) of the boundaries~\cite{Hamma_Zanardi_2005} ($\partial \mathcal{R}$) of $\mathcal{R}$ and the quantum dimension ($\mathcal{D}$) of the Hilbert space~\cite{Levin_Wen_2006}
\begin{eqnarray}
S_{\mathcal{R}}=-j\log \mathcal{D} - n \displaystyle\sum_{k=0}^{N} \frac{d_k^2}{\mathcal{D}} \log  \bigg(\frac{d_k^2}{\mathcal{D}}\bigg)~,
\label{eq:wenEE}
\end{eqnarray}
and where $n$ is the number of states lying on $\partial \mathcal{R}$. The quantum dimension $\mathcal{D}$ is a property of the complete system, and does not depend on the choice of the subsystems.

\noindent Further, Refs.\cite{Kitaev_Preskill_2006,Levin_Wen_2006} showed that a purely topological part of entanglement entropy (dubbed as topological entanglement entropy (TEE)) can be detected by measuring the tripartite information $I^3$ for a particular annular arrangement of three subsystems: $I_3(A,B,C)=S_{A}+S_{B}+S_{C}-(S_{AB}+S_{BC}+S_{CA})+S_{ABC}= -2S_{topo}$, where $S_{topo}=\log \mathcal{D}$. Using the above formula eq.\eqref{eq:wenEE}, one can easily verify an essential feature of $I^3$: it is defined so as to be independent of the geometry of the arrangement of subsystems; instead, it depends only on the quantum dimension($\mathcal{D}$) of the topologically ordered system. We will now extend this result to show that an appropriately defined $N$-partite information measure can capture the same topological entanglement entropy (TEE) by a careful arrangement of $N$ subsystems. Further, we confine our interest to the case of 2-spatial dimensional topologically ordered systems in this work.

\noindent We first clarify some important mathematical notations and conventions. Our goal is to define the $N-$partite information $I^N_{\{\mathcal{A}_N\}}$ for a collection of subsystems (CSS, with $N$ subsystems) with a unique arrangement specified by $\{\mathcal{A}_N\}\equiv \{A_1,A_2,~.~.~,A_N\}$. Some examples of CSS considered by us are given in Fig.\eqref{fig:island_subsystem_main}, with individual subsystems labelled by $A_1, A_2,~.~.~,A_N$. If there is no overlap between two subsystems $A_i$ and $A_j$, then their intersection vanishes: $A_i\cap A_j=\emptyset$. We define the power set of the CSS $\{\mathcal{A}_N\}$ as $\mathcal{P}(\{\mathcal{A}_{N}\})$, and the collection of all subsets of $\mathcal{P}(\{\mathcal{A}_{N}\})$ with $m$ subsystems in it as $\mathcal{B}_m(\{\mathcal{A}_N\})\equiv \{ Q~| ~Q\subset \mathcal{P}(\{\mathcal{A}_{N}\}), |Q|=m \}$. We also define the union and intersection of all the subsystems present in $Q$ as ${V}_{\cup}({Q})\equiv \bigcup_{A\in Q} A$ and ${V}_{\cap}({Q})\equiv \bigcap_{A\in Q} A$ respectively. Finally, the von Neumann entanglement entropy of the subsystem $A$ (with length $L_{A}$) for a topologically ordered ground state is given by $S_{A}=\alpha L_A-\gamma_A$, and $\gamma_A$ represents the topological terms in $S_{A}$. 

\noindent Then, the $N-$partite information is defined as 
\begin{eqnarray}
I^{N}_{\{\mathcal{A}_N\}} &=& \bigg[\displaystyle\sum_{m=1}^{N} (-1)^{m-1} \displaystyle\sum_{Q \in \mathcal{B}_m(\{\mathcal{A}\})} S_{V_{\cup}({Q})} \bigg]- S_{V_{\cap}(\mathcal{A}_N)}~.~~~~
\label{eq:I_N_definition}
\end{eqnarray}
In this work, we will focus on the cases where there are no overlaps among the $N$ subsystems within the CSS, $V_{\cap}(\{\mathcal{A}_N\})=\emptyset$. In this way, we exclude any nontrivial contributions to the TEE arising from such an overlap in all CSS that we study: $\gamma_{V_{\cap}(\mathcal{A}_N)}=0$. Further, we also assume that there is no overlap among $m$ number of subsystems in the CSS for $N\geq m>2$. Thus, for an example of $m=3$, $A_i \cap A_j \cap A_k =\emptyset,~\forall i\neq j\neq k$. One can easily check that for $N=3$, eq.\eqref{eq:I_N_definition} then becomes the tripartite information $I^{3}$ given above for an annular structure of a CSS of three subsystems (i.e., with a hole in the center) \cite{Kitaev_Preskill_2006,Levin_Wen_2006}. 
\par\noindent We will now demonstrate that the multipartite information measure $I^N_{\{\mathcal{A}_N\}}$ is chosen such that all geometric content within it vanish identically. For this, we first define the geometric area of a subsystem $A$ as $R({A})$. Then, the geometry dependence ($\mathcal{R}_{\{\mathcal{A}_N\}}$) of $I^N_{\{\mathcal{A}_N\}}$ is computed as follows
\begin{eqnarray}
&& \mathcal{R}_{\{\mathcal{A}_N\}} = \displaystyle\sum_{m=1}^{N} (-1)^{m-1} \displaystyle\sum_{Q \in \mathcal{B}_m(\{\mathcal{A}\})} R({V_{\cup}({Q})}) ~,~~\nonumber\\
&& ~~~~= \displaystyle\sum_{m=1}^{N} \displaystyle\sum_{i=1}^{N} (-1)^{m-1} {N-1 \choose m-1}\times R(A_i) \nonumber\\
&& ~~~~= \bigg[ \displaystyle\sum_{i=1}^{N} R(A_i) \bigg] \times \bigg[\displaystyle\sum_{m=1}^{N} (-1)^{m-1} {N-1 \choose m-1} \bigg] =0~.~~
\label{eq:geometryind}
\end{eqnarray}
\par\noindent Thus, we find that $I^N_{\{\mathcal{A}_N\}}$ is indeed independent of the geometry of its constituents, i.e., the last term in eq.\eqref{eq:wenEE} (related to the number of the states $n$ in the subsystem perimeter) cancel one other within the measure $I^N$ (eq.\eqref{eq:I_N_definition}). Thus, for topologically ordered systems, the multipartite information measure $I^N$ will depend only on $\log \mathcal{D}$, and with a prefactor that depends on the choice of subsystems
\begin{eqnarray}
I^{N}_{\{\mathcal{A}_N\}} &=& -\mathcal{C}^N_{\{\mathcal{A}_N\}} \log \mathcal{D}~,~~\\
\mathcal{C}^N_{\{\mathcal{A}_N\}}&=& \displaystyle\sum_{m=1}^{N} (-1)^{m-1} \displaystyle\sum_{Q \in \mathcal{B}_m(\{\mathcal{A}\})} {\mathcal{J}}_{V_{\cup}({Q})} ~,~~~~
\label{eq:N-partite-connectivity}
\end{eqnarray}
\noindent where $\mathcal{J}_{\mathcal{B}}$ is the number of disconnected/disjoint boundaries of the subsystem $\mathcal{B}$ or the number of disconnected components of $\partial\mathcal{B}$ (the boundary of $\mathcal{B}$). 
\par\noindent 
Indeed, we will see in the following section that the quantity $\mathcal{C}^N_{\{\mathcal{A}_N\}}$ quantifies the nontrivial topology of the CSS. We demonstrate both trivial as well as nontrivial choices for the topology of a CSS, and describe the transformations that leave $I^N_{\mathcal{A}_N}$ invariant. This will generalise the results of Ref.\cite{Kitaev_Preskill_2006,Levin_Wen_2006} on how to detect the TEE of a CSS of $N$ subsystems via a $N-$partite information measure, and offer insights into the nature of the many-particle entanglement encoded in such systems.

\section{Multipartite information: computation}
\label{section:I_N_computation}

\par\noindent We confine ourselves to the study of CSS $\{\mathcal{A}_N\}$ that are placed on a 2D planar manifold. Thus, the individual subsystems are 2-dimensional, and their boundaries are 1-dimensional curves. As mentioned earlier, we also assume that there exist no overlaps among different subsystems other than nearest neighbours. Starting with the simplest case of an annulus, we now compute the $I^{N}$ for several different arrangements of the CSS in order to understand the role played by subsystem topology.
\begin{figure}[!h]
\includegraphics[scale=0.28]{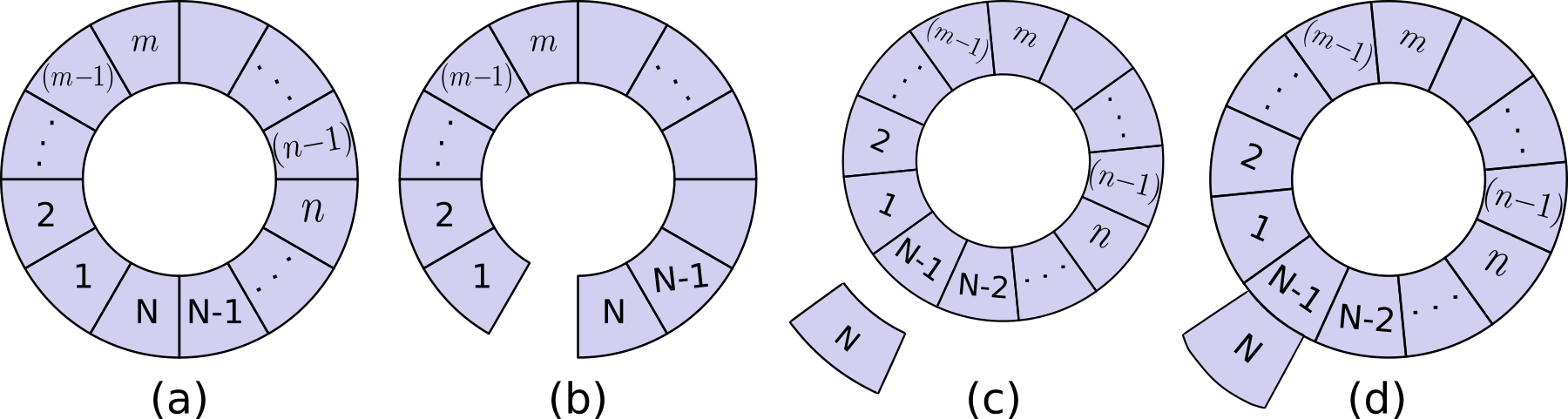}
\caption{(a) CSS with $N$ members placed in an annular structure ($\{\mathcal{A}^{(a)}_N\}\equiv\{A_1,~.~.~,A_N\}$), and where each individual subsystem is simply connected. (b) An open CSS ($\{\mathcal{A}^{(1b)}\}$) formed out of $N$ number of simply connected subsystems, such that their union does not form a closed annular structure. 
(c) CSS ($\{\mathcal{A}^{(1c)}\}$) composed of $N$ subsystems, where $N-1$ form a closed CSS ($\{\mathcal{A}^{(a)}_{N-1}\}$) and the last is an isolated island $A_N$. (d) CSS ($\{\mathcal{A}^{(1d)}\}$) formed from $N+1$ subsystems created by joining the (appendage) subsystem $A_{N-1}$ with the closed annular CSS $\{\mathcal{A}^{(a)}_{N-1}\}$.}
\label{fig:island_subsystem_main}
\end{figure}

\subsection{Simple annular closed and open structures}
\label{bm:simple_annular}
\par\noindent 
We first study the simplest CSS shown in the Fig.\ref{fig:island_subsystem_main}(a), where the subsystems $\{\mathcal{A}^{(a)}_N\}=\{A_1,A_2,~.~.~,A_N\}$ are arranged in an annulus. Each subsystem $A_i\in \{\mathcal{A}^{(a)}_N\}$ has a single disconnected boundary, $\mathcal{J}_{A_i}=1$~. One can also easily see that for such a CSS,  $V_{\cap}(\{\mathcal{A}^{(a)}_N\})=\emptyset$. Then, the quantity
\vspace{-0.30cm}
\begin{eqnarray}
\mathcal{C}^N_{\{\mathcal{A}^{(a)}_N\}}&=& \displaystyle\sum_{m=1}^{N}  (-1)^{m-1}~ \Gamma^N_{m}~, \nonumber\\
~~\Gamma^N_{m}&=& \displaystyle\sum_{Q \in \mathcal{B}_m(\{\mathcal{A}_N\})} {\mathcal{J}}_{V_{\cup}({Q})} ~,~~~~
\end{eqnarray}
\noindent where $\Gamma^N_{m}$ represents the total number of disconnected boundaries coming from all possible choices of $m$ subsystems (out of a total $N$ subsystems). One of our main results involves computing the count $\Gamma^N_m$. As shown in Appendix \eqref{ap:cycle_graph}, for an annular arrangement of subsystems $\{\mathcal{A}^{(a)}_N\}$, $\bigg[\displaystyle\sum_{m=1}^{N-1}  (-1)^{m-1}\displaystyle\sum_{Q \in \mathcal{B}_m(\{\mathcal{A}^{(a)}_N\})} {\mathcal{J}}_{V_{\cup}({Q})}\bigg]=0$. Thus, we obtain 
\begin{eqnarray}
\mathcal{C}_{\{\mathcal{A}^{(a)}_N\}}^{N} &=& (-1)^{N-1} \Gamma_{N}^{N} = (-1)^{N-1}\mathcal{J}_{\cup_i A_i}= 2\times (-1)^{N-1} ~,~
\end{eqnarray}
as the number of disconnected boundaries of the entire CSS as a whole (i.e., $\bigcup_i A_i$ in the CSS $\{\mathcal{A}^{(a)}_N\}$) is given by $\Gamma^N_N=\mathcal{J}_{\cup_i A_i}=2$ (see Fig.\ref{fig:island_subsystem_main}(a)). We have also shown in Appendix (\ref{ap:simpleannular_euler}) that the prefactor $2$ corresponds to the Euler characteristics $\chi$ of the CSS embedded on the underlying planar manifold. Thus, the $N-$partite information simplifies to 
\begin{eqnarray}
I^N_{\{\mathcal{A}^{(a)}_{N}\}}&=&-\mathcal{C}^N_{\{\mathcal{A}^{(a)}_{N}\}} \log \mathcal{D}=(-1)^{N}2\log\mathcal{D}~, \nonumber\\
&=& (-1)^N \chi S_{topo}~, ~~~~
\label{eq:IN_annular}
\end{eqnarray}
\noindent Thus, we see that for a simple annular arrangement of subsystems, the amplitude of the $N-$partite information has the same value $|I^N_{\{\mathcal{A}^{(a)}_{N}\}}|=2\log\mathcal{D}$ for all $N$. For the case of $N=3$, this reduces to the well known result for the tripartite information \cite{Kitaev_Preskill_2006,Levin_Wen_2006}. Our generalization highlights a property likely special to a topologically ordered system: any $N-$partite information ($N\geq 3$) is able to capture the TEE of $S_{topo}$.

\par\noindent On the other hand, if the structure of the CSS is open (see Fig.\ref{fig:island_subsystem_main}(b)), the $N-$partite entanglement measure will vanish even for a topologically ordered ground state: $I^N_{\{\mathcal{A}^{(1b)}_{N}\}}=0$. This is shown in Appendix \eqref{ap:path_graph}. Again, this shows the crucial role of the subsystem topology of the CSS in capturing $S_{topo}$.
\subsection{Isolated subsystems and appendages}
\label{section:isolated_island}

\par \noindent 
We now turn our attention to the case of a CSS comprised of a disjoint union of an annular arangement of $N-1$ subsystems and an isolated subsystem labelled by $N$ as shown in the Fig.\ref{fig:island_subsystem_main}(c). We use eq.\eqref{eq:N-partite-connectivity} to calculate $\mathcal{C}^N_{\{\mathcal{A}^{(1b)}_N\}}$, where ${\mathcal{J}}_{V_{\cap} (\{\mathcal{A}^{(1b)}_N\})}=0$. Recall that
\begin{eqnarray}
\mathcal{C}^N_{\{\mathcal{A}^{(1b)}_N\}}&=& \displaystyle\sum_{m=1}^{N} (-1)^{m-1} \displaystyle\sum_{Q \in \mathcal{B}_m(\{\mathcal{A}^{(1b)}_{N}\})} {\mathcal{J}}_{V_{\cup}({Q})}  ~.~
\label{eq:css_open}
\end{eqnarray}
As the CSS $\{\mathcal{A}^{(1b)}_{N}\}$ is a disjoint union of two smaller CSS, $\{\mathcal{A}^{(1b)}_{N}\}=\{\mathcal{A}^{(1b),N}_{N-1}\}\cup \{A_N\}$. Thus, upon expansion of eq.\eqref{eq:css_open}, we can rewrite $\mathcal{C}^N_{\{\mathcal{A}^{(1b)}_N\}}$ as
\begin{eqnarray}
&&\mathcal{C}^N_{\{\mathcal{A}^{(1b)}_N\}} = \mathcal{C}^{N-1}_{\{\mathcal{A}^{(1b)}_N\}-\{A_N\}}+\mathcal{J}_{A_N} -\displaystyle\sum_{i=1}^{N-1} \mathcal{J}_{A_N\cup A_i}\nonumber\\
&&~~~+ \displaystyle\sum_{i<j=1}^{N-1} \mathcal{J}_{A_N\cup A_i\cup A_j} ~.~.~ +(-1)^{N-1}\mathcal{J}_{A_N\cup A_1\cup ~.~ \cup A_{N-1}}\nonumber\\
&&= \mathcal{C}^{N-1}_{\{\mathcal{A}^{(1b)}_N\}-\{A_N\}}-\mathcal{C}^{N-1}_{\{\mathcal{A}^{(1b)}_N\}-\{A_N\}}+\mathcal{J}_{A_N} \Upsilon_{N-1} \nonumber\\ 
&&= 0~,
\label{inapp}
\end{eqnarray}
where $\Upsilon_{N-1}=\sum_{i=0}^{N-1} (-1)^i~~ {{N-1}\choose i} =0$. The detailed derivation of the relation eq.\eqref{inapp} is shown in the Appendix (\ref{ap:isolated_Structure}). Thus, one can see that for a CSS of $N$ subsystems that can be decomposed into disjoint sub-structures, a vanishing global connectivity measure $C^{N}$ gives rise to a vanishing $N$-partite entanglement measure $I^{N}$.

\par\noindent We have also considered a CSS of $N$ number of subsystems in which an appendage has been added to the simple annular structure (see Fig.\ref{fig:island_subsystem_main}(d)). As shown in Appendix \eqref{ap:Appendage}, the multipartite information vanishes for this CSS as well: $I^N_{\{\mathcal{A}^{(1d)}_{N}\}}=0$. In this way, we have found that the closed annular nature of the CSS is essential for a nontrivial value of $I^{N}$. We will now analyse deformations of the CSS that keep the result $I^N_{\{\mathcal{A}\}}=(-1)^{N}\chi\log\mathcal{D}~$~  invariant.

\subsection{Individual subsystem boundary with multiple disconnected components}
\label{section:multiple_disconnected_components}

\noindent We are interested in those cases of CSS with $N$ subsystems where individual subsystems have either holes and/or self-handles (as shown in Fig.\ref{fig:island_multiplyConnected}(a)). Unlike the simple annular case shown in Fig.\ref{fig:island_subsystem_main}(a), the number of disconnected boundaries of an individual subsystem can in such cases be an integer value higher than 1. We represent such CSS by $\{\mathcal{A}^{(2a)}_{N}\}$.
\begin{figure}[!h]
\includegraphics[scale=0.36]{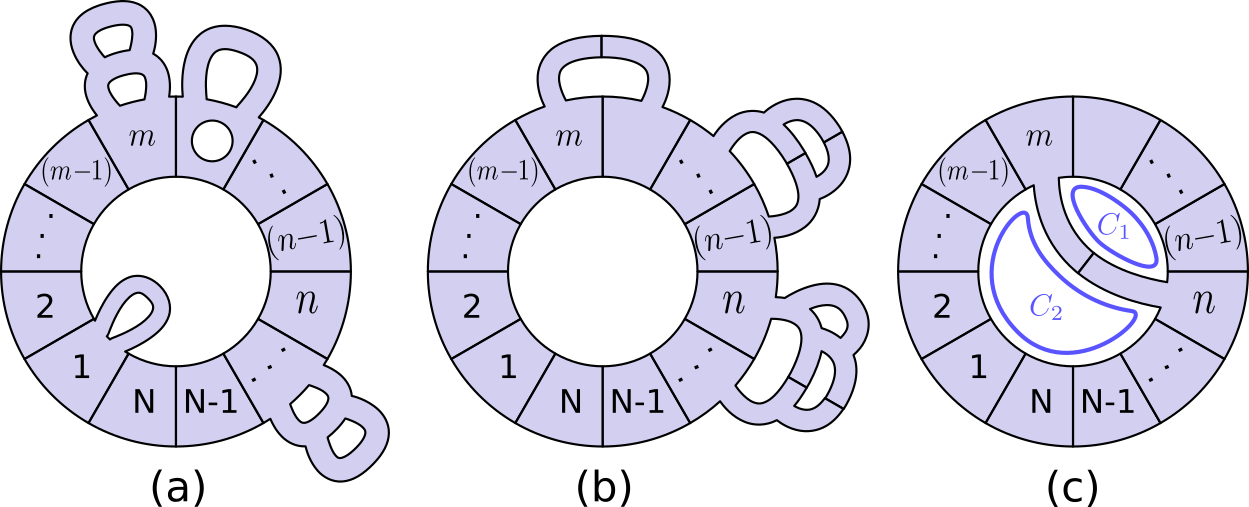}
\caption{(a) CSS $\{\mathcal{A}^{(2a)}_{N}\}$ represents the case where individual subsystems have holes and multiple handles, but there is no inter subsystem handles. (b) CSS representing the case where there are handles among nearest neighbour subsystems, but no connection among further neighbour subsystems. (c) CSS with a connection between two subsystems ($A_m$ and $A_n$) that lie beyond nearest neighbour ($n-m>1$).}
\label{fig:island_multiplyConnected}
\end{figure}
\par\noindent
We now calculate the quantity $\mathcal{C}^N_{\{\mathcal{A}^{(2a)}_{N}\}}$ as a deviation from $\mathcal{C}^N_{\{\mathcal{A}^{(a)}_{N}\}}$, due to the increase in the number of disconnected boundaries of the individual subsystems. Note that here, $\mathcal{J}_{A_i}=1+\mu_i$, where $\mu_i$ is the change in the number of disconnected boundaries with respect to simple annular case (see Fig.\eqref{fig:island_subsystem_main}(a) and the discussion in Section \ref{bm:simple_annular}). Thus, we obtain
\begin{eqnarray}
\mathcal{C}^N_{\{\mathcal{A}^{(2a)}_N\}}&=& \displaystyle\sum_{m=1}^{N} (-1)^{m-1} \displaystyle\sum_{Q \in \mathcal{B}_m(\{\mathcal{A}^{(2a)}_{N}\})} {\mathcal{J}}_{V_{\cup}({Q})}  ~,~\nonumber\\
&=& \mathcal{C}^N_{\{\mathcal{A}^{(a)}_N\}} + \displaystyle\sum_{m=1}^{N} (-1)^{m-1}~ \Xi^N_m ~,~~\nonumber\\
 \Xi^N_{m} &=& \displaystyle\sum_{Q \in \mathcal{B}_m(\{\mathcal{A}^{(2a)}_{N}\})} \displaystyle\sum_{A_i\in Q} \mu_i= {N-1 \choose m-1} \displaystyle\sum_{A_i\in \{\mathcal{A}^{(2a)}_{N}\}} \mu_i~.~~~~~~~\nonumber
\end{eqnarray}

One can easily simplify this result using the relation 
\begin{eqnarray}
\displaystyle\sum_{m=1}^{N} (-1)^{m-1} {N-1 \choose m-1} =0~,
\end{eqnarray}
thereby leading to $\mathcal{C}^N_{\{\mathcal{A}^{(2a)}_N\}}= \mathcal{C}^N_{\{\mathcal{A}^{(a)}_N\}}$. Then, using eq.\eqref{eq:IN_annular}, we again obtain 

\begin{eqnarray}
I^N_{\{\mathcal{A}^{(2a)}_N\}}= I^N_{\{\mathcal{A}^{(a)}_N\}}=(-1)^N \chi S_{topo}~.~~~~~
\end{eqnarray}
Thus, we find that the addition of self-handles and holes in the individual subsystems does not affect our earlier result for the $N-$partite information measure $I^{N}$. This confirms the robustness of our multipartite information measure in topologically ordered phases against such a deformation.

\subsection{Adding nearest-neighbour handles}
\label{subsec:nnhandles}
\noindent  As shown in Fig.\ref{fig:island_multiplyConnected}(b), we now deform the simple annular structure (Fig.\ref{fig:island_subsystem_main}(a)) by adding any number of inter subsystem handles between nearest neighbour subsystems. The deformed CSS is denoted by $\{\mathcal{A}^{(2b)}_{N}\}$. 
 
In order to compute the $N-$partite information $I^N_{\{\mathcal{A}^{(2b)}_{N}\}}$, we first calculate 
$\mathcal{C}^N_{\{\mathcal{A}^{(2b)}_{N}\}}$. Recall that $\mathcal{J}_{A_i\cup A_{i+1\textrm{mod}N}}=1$ for the simple annular case. Due to addition of extra $\nu_i$ number of handles between the nearest neighbour subsystems $A_i$ and $A_{i+1\textrm{mod}N}$, we have increased the number of disconnected boundary to $\mathcal{J}_{A_i\cup A_{i+1\textrm{mod}N}}=1+\nu_i$. Thus ,
\begin{eqnarray}
\mathcal{C}^N_{\{\mathcal{A}^{(2b)}_{N}\}} &=& \mathcal{C}^N_{\{\mathcal{A}^{(a)}_{N}\}} + \displaystyle\sum_{m=2}^{N} (-1)^{m-1} \displaystyle\sum_{i=1}^{N} \nu_i {N-2 \choose m-2} ~ ~~\nonumber\\
&=& \mathcal{C}^N_{\{\mathcal{A}^{(a)}_{N}\}} + \displaystyle\sum_{i=1}^{N} \nu_i\displaystyle\sum_{m=2}^{N} (-1)^{m-1}  {N-2 \choose m-2} ~ ~~\nonumber\\
&=& \mathcal{C}^N_{\{\mathcal{A}^{(a)}_{N}\}}~.
\end{eqnarray}
Thus the $N$-partite information measure is also invariant under this deformation  
\begin{eqnarray}
I^N_{\{\mathcal{A}^{(2b)}_{N}\}} &=& I^N_{\{\mathcal{A}^{(a)}_{N}\}}= (-1)^N \chi S_{topo}~.~~~~~~
\end{eqnarray}
\subsection{Addition of further-neighbour handles} 
\label{section:morethan_nearest_neighbour}

\par\noindent Having analysed deformations of the CSS that leave the multipartite information $I^{N}$ invariant, we now turn to a deformation that trivializes it. For this, we add a single handle between two subsystems $A_m$ and $A_{n}$ with atleast one subsystem lying in between them (such that $A_m$ and $A_n$ are not nearest neighbours, see Fig.\ref{fig:island_multiplyConnected}(c)). We choose the subsystem label such that $n< m \leq N$, where $(m-n)>1$. It is easily seen that upon adding such a handle, we create two closed loops $C_1$ and $C_2$ formed out of $p$ and $q$ number of subsystems respectively, where $p=(m-n+1)$ and $q=(N-m+n+1)$ with $q>p$ and $p+q-2=N$. Thus, we can now create simple annular CSS (of the kind seen in Fig.\eqref{fig:island_subsystem_main}(a)) from the closed loops $C_1$ and $C_2$, and denoted by $\{\mathcal{A}^p\}$ and $\{\mathcal{A}^q\}$ respectively. We now compute 
\begin{eqnarray}
\mathcal{C}^N_{\{\mathcal{A}^{(2a)}_N\}}&=& \displaystyle\sum_{m=1}^{N} (-1)^{m-1}\bigg[\mu_m + \displaystyle\sum_{Q \in \mathcal{B}_m(\{\mathcal{A}^{(a)}_{N}\})} {\mathcal{J}}_{V_{\cup}({Q})}  \bigg] ~,~
\end{eqnarray}
where the modification terms $\mu_m$ (apparent upon the introduction of the handle) are given by
\begin{eqnarray}
\mu_1 &=& 0~,~\mu_N=+1~,~ 
\mu_{m} = -{{N-2}\choose{j-2}} ~ ,~\forall~  2 \le m<p~, \nonumber\\
\mu_{m} &=& 2 {{N-p}\choose{j-p}} -{{N-2}\choose{j-2}}~,~ \forall~ p\le m <q~, \nonumber\\
\mu_{m} &=& 2{{N-p}\choose{j-p}}+2{{N-q}\choose{j-q}} -{{N-2}\choose{j-2}}~, \nonumber\\
&&~~~~~~~~~~~~~~~~~~~~~~~~~~~~~~~~~~~~~~~ \forall~ q \le m < N ~.~~~~
\end{eqnarray}
\noindent Thus, we obtain $\mathcal{C}^N_{\{\mathcal{A}^{(2a)}_N\}}=\mathcal{C}^N_{\{\mathcal{A}^{(a)}_N\}}+ \displaystyle\sum_{m=1}^{N} (-1)^{m-1} \mu_m$. Further simplification gives 
\begin{eqnarray}
\mathcal{C}^N_{\{\mathcal{A}^{(2a)}_N\}}&=&\mathcal{C}^N_{\{\mathcal{A}^{(a)}_N\}}+ 2(-1)^N\nonumber\\
&=& 2(-1)^{N-1}+2(-1)^N=0~.~~~~~
\end{eqnarray}
Additionally, in Appendix \ref{ap:holes_in_css}, we also show that the vanishing of $\mathcal{C}^N_{\{\mathcal{A}^{(2a)}_N\}}$ in this case arises from the fact that $\mathcal{C}^N_{\{\mathcal{A}^{(2a)}_N\}} = (-1)^{N}(\chi - 2)$, where $\chi \equiv 2$ is the Euler characteristic of the CSS with $N$ subsystems embedded on the planar manifold.
In turn, the vanishing $\mathcal{C}^N$ of leads to the vanishing of the $N-$partite information for the entire CSS 
\begin{equation}
I^N_{\{\mathcal{A}^{(2c)}_N\}}= -\mathcal{C}^N_{\{\mathcal{A}^{(2a)}_N\}}\log\mathcal{D} = 0~.
\label{eq:allzero}
\end{equation}
Instead, the $p$-partite and $q-$partite information for the CSS $\{\mathcal{A}^p\}$ and $\{\mathcal{A}^q\}$ (i.e., the two smaller loops $C_1$ and $C_2$) are found to be non-zero as long as $p,q\geq 3$:
\begin{equation}
I^p_{\{\mathcal{A}^p\}}=(-1)^p\chi\log\mathcal{D}~,~I^q_{\{\mathcal{A}^q\}}=(-1)^q\chi\log\mathcal{D}~.
\label{eq:newinfo}
\end{equation}
\par\noindent 
\begin{figure}[!h]
\centering
\includegraphics[scale=0.196]{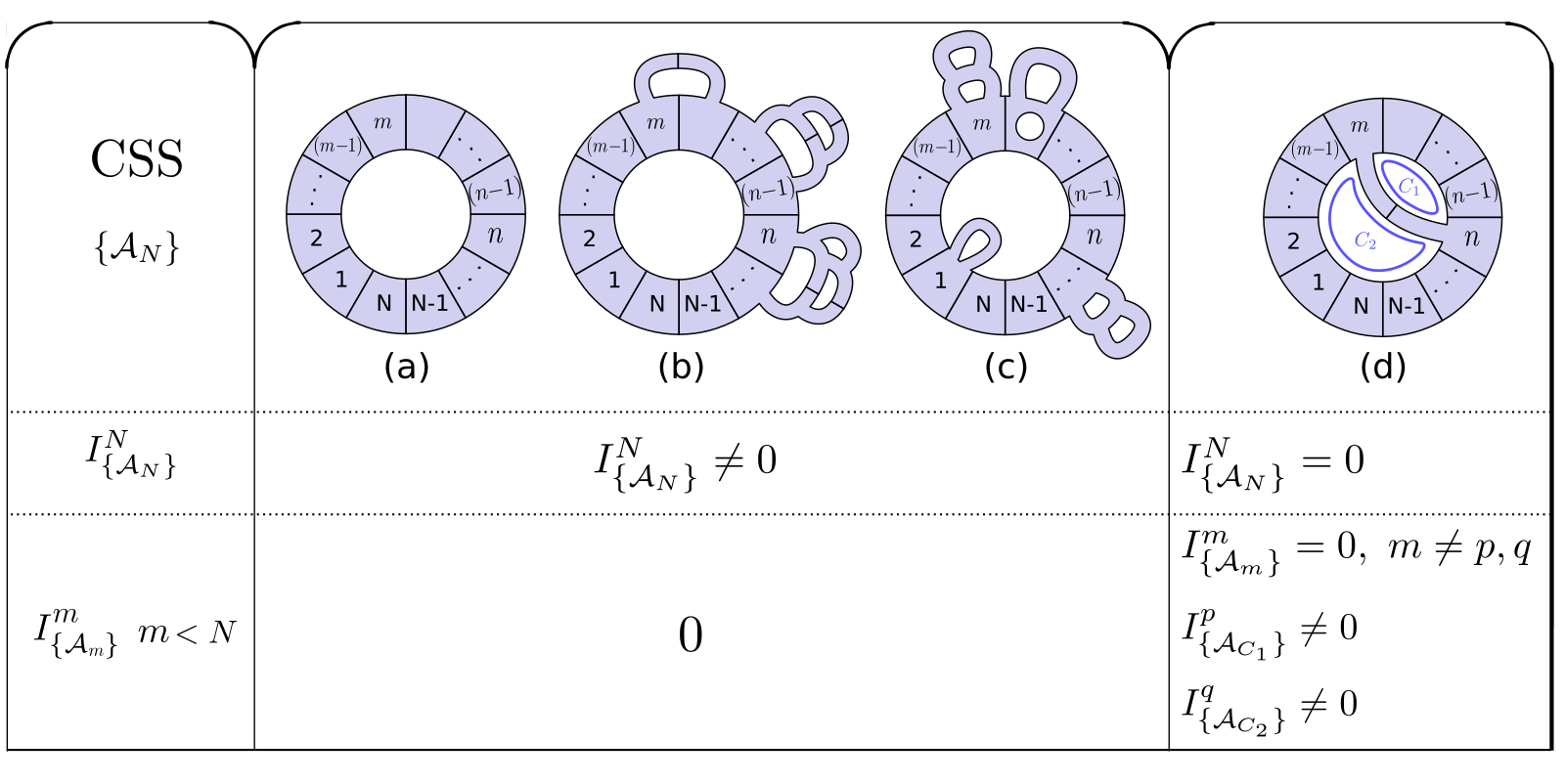}
\caption{Summary of various results for the $N$-partite information $I^{N}$ corresponding to different CSS topologies placed on a planar manifold presented in Section \ref{section:I_N_computation}. Please refer to the text for details. }
\label{fig:summary}
\end{figure}

\par\noindent 
Finally, we summarise the results of this section in Fig.\ref{fig:summary}. The simple annular CSS ((a) in Fig.\ref{fig:summary}) with $N$ number of subsystems possesses a non-zero multipartite information $I^{N}$, dependent on the Euler characteristic ($\chi$) of the CSS embedded on the underlying manifold and the quantum dimension of the ground state manifold ($\mathcal{D}$). Deformations of this simple annular structure that involve the addition of intra subsystem handles or holes ((b) in Fig.\ref{fig:summary}) and the addition of nearest neighbour handles ((c) in Fig.\ref{fig:summary}) leave the multipartite information measure $I^{N}$ invariant. On the other hand, the addition of the further-neighbour handles ((d) in Fig.\ref{fig:summary}) trivializes the $I^N$ measure. However, the addition of the further-neighbour handles creates two closed loops $C_1$ and $C_2$ ((e) and (f) in Fig.\ref{fig:summary}) with a lesser number of sub-sysems ($p$ and $q$ respectively, corresponding to the CSS $\{A_p\}$ and $\{\mathcal{A}_q\}$). The smaller loops $C_1$ and $C_2$ again possess a non-zero multipartite information (as long as $p,q\geq 3$). In this way, we find that it is always possible to identify a simple annular structure with an $I^{N}$ that can detect the topological entanglement entropy.
\section{Measurement Constraints}
\label{section:measurement_constraint}

\par\noindent 
Having ascertained the importance of subsystem topology in attaining a nontrivial multipartite information measure $I^{N}$, we now turn our attention to measurements for a more general case of a CSS that has more than one hole in it, i.e., composed of more than one annular structure. 
\begin{figure}[!h]
	\includegraphics[scale=0.6]{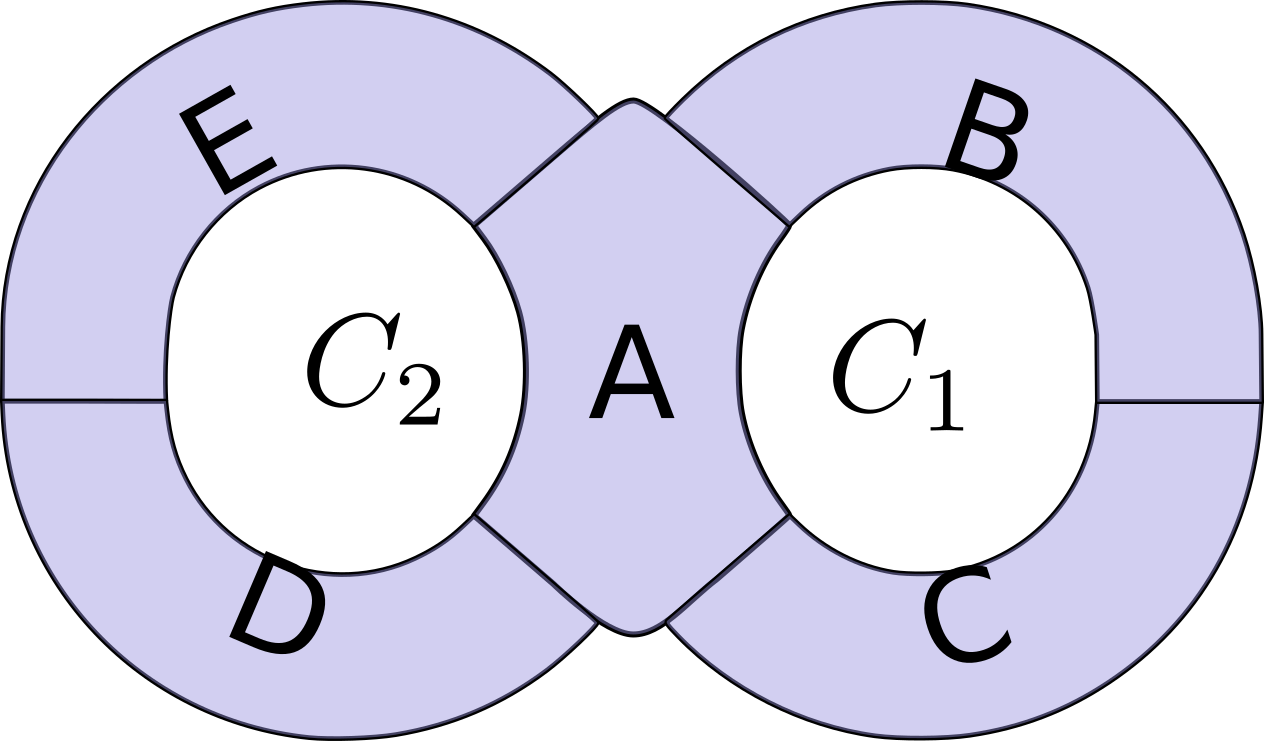}
	\caption{CSS with 5 subsystems ($\{\mathcal{A}^{(2)}_{5}\}=\{A,B,C,D,E\}$) and embedded on planar manifold. The two holes in the CSS are denoted by $C_1$ and $C_2$.}
	\label{fig:genus_2_1}
\end{figure}
As an example, we start with a minimally complex CSS $\{\mathcal{A}^{(2)}_{5}\}$ that has 2 holes in it (see Fig.\eqref{fig:genus_2_1}). Two closed annular structure is $C_1$ and $C_2$ associated with the smaller CSS $\{\mathcal{A}^{C_1}_{3}\}=\{A,B,C\}$ and $\{\mathcal{A}^{C_2}_{3}\}=\{A,D,E\}$ respectively. 
Using eq.\eqref{eq:IN_annular} for the CSS $\mathcal{A}^{C_1}_{3}$ and $\mathcal{A}^{C_2}_{3}$ shown in Fig.\eqref{fig:genus_2_1}, it is easily seen that 
\begin{eqnarray}
I^3_{\{\mathcal{A}^{C_1}_{3}\}}=I^3_{\{\mathcal{A}^{C_2}_{3}\}}=(-1)^3 \chi\log \mathcal{D}~.~
\end{eqnarray}
\par\noindent Now, our goal is to calculate $I^N_{\{\mathcal{A}^{(2)}_5\}}$ for the CSS $\{A^{(2)}_5\}$ shown in Fig.\eqref{fig:genus_2_1}. For this, we first note that the $N-$partite information for any general CSS $\{A\}$, as shown in the eq.\eqref{eq:I_N_definition}, can be re-written in terms of various lower order multipartite information measures as follows
\begin{eqnarray}
I^{N}_{\{A_N\}}&=& \displaystyle\sum_{\mu=1}^{N-2}(-1)^{\mu-1} \displaystyle\sum_{\substack{R\in \mathcal{B}_{N-\mu}(\{A\})\\ |R|=N-\mu}} I^{|R| }_{R}  \nonumber\\
&&~~~~~~~~~~~~~~ + (-1)^{N} \bigg(  \displaystyle\sum_{i} S_{A_i} - S_{\cup A_i}  \bigg)~.~
\label{eq:I_N_recursion}
\end{eqnarray}
The detailed derivation for eq.\eqref{eq:I_N_recursion} has been shown in Appendix \ref{app:recursionIN}.
Using eq.\eqref{eq:I_N_recursion}, we obtain
\begin{eqnarray}
&&I^5_{\{A^{(2)}_5\}}= \displaystyle\sum_{\{a\}\in \mathcal{B}_{4}(\{A^{(2)}_5\})} I^4_{\{a\}}-\displaystyle\sum_{\{b\}\in \mathcal{B}_{3}(\{A^{(2)}_5\})} I^3_{\{b\}}\nonumber\\
&&+\displaystyle\sum_{\{c\}\in \mathcal{B}_{2}(\{A^{(2)}_5\})} I^2_{\{c\}}  +(-1)^{N} \bigg(  \displaystyle\sum_{i} S_{A_i} - S_{\cup A_i}  \bigg)~.~~~~~~
\end{eqnarray}
\noindent One can easily check that all the $I^4_{\{..\}}$ terms vanish identically, as there is no possible closed annular region in Fig.\eqref{fig:genus_2_1} with $4$ subsystems and where each subsystem has two unique nearest neighbours. Some of the $I^3_{\{..\}}$ terms also vanish for the same reason, leaving only two non-vanishing $I^{3}$ terms: $I^3_{\{A,B,C\}}\neq 0 \neq I^3_{\{A,D,E\}}$. Further, only some of the ${5 \choose 2}$ number of $I^2_{\{..\}}$ terms vanish, as $\mathcal{C}^2_{A_i,A_j}=0$ if $\mathcal{J}_{{A_i}\cup A_j} \neq 1$, and $\mathcal{C}^2_{A_i,A_j}=1$ otherwise. Also, from its definition, we know that $\mathcal{J}_{A}=1$ for a subsystem $A$ with one disconnected boundary. Finally, we note that $\mathcal{J}_{\cup_i A_i}=3$, as the number of disconnected boundaries of the entire CSS ($\cup_i A_i$) in the Fig.\eqref{fig:genus_2_1} is $3$. Using these properties, we find
\begin{eqnarray}
&&I^5_{\{A^{(2)}_5\}}=-(I^3_{\{A,B,C\}}+I^3_{\{A,D,E\}} )\nonumber\\
&&+ (I^2_{\{A,B\}}+I^2_{\{B,C\}}+I^2_{\{C,A\}}+I^2_{\{A,D\}}+I^2_{\{D,E\}}+I^2_{\{E,A\}}) \nonumber\\
&& ~~~~~~~~~~~~~~~~ +(-1)^{5} \bigg(  \displaystyle\sum_{i} S_{A_i} - S_{\cup A_i}  \bigg)=0~.~~~~~~
\end{eqnarray}
This can be re-written as
\begin{eqnarray}
&&I^3_{\{A,B,C\}}+I^3_{\{A,D,E\}}= (-1)^{5}\bigg[\displaystyle\sum_{i} S_{A_i} - S_{\cup A_i}\bigg]+\bigg[I^2_{\{A,B\}}\nonumber\\
&&~~~~+I^2_{\{B,C\}}+I^2_{\{C,A\}}+I^2_{\{A,D\}}+I^2_{\{D,E\}}+I^2_{\{E,A\}}\bigg]~, ~~~~~~ \nonumber\\
&&~~~~~~~~= -2\chi\log\mathcal{D}~,~~ \nonumber \\
&&~~~~\Rightarrow |I^3_{\{A,B,C\}}|+|I^3_{\{A,D,E\}}| = 2\chi\log\mathcal{D}~.  
\label{eq:inf_cons_}
\end{eqnarray}

\noindent Following the calculation shown in Appendix \ref{ap:entanglemetn_constraint}, we identify the factor of $2$ in eq.\eqref{eq:inf_cons_} as arising from the two holes in the subsystem configuration. For a general configuration with $n_{h}$ number of holes, we find that the sum of the $n_{h}$ multipartite informations computed around the $n_{h}$ holes is related to $S_{topo}$ as 
\begin{eqnarray}
\displaystyle\sum_{j=1}^{n_h} |I^{\mu_j}_{\{A_j\}}|=\chi n_h \log\mathcal{D} = \chi n_h S_{topo}~,
\label{eq:alltopo}
\end{eqnarray}
where $\mu_{j}$ counts the number of subsystems around the $j$th hole, and $\chi$ is the Euler characteristic of the CSS embedded on the planar manifold ($\chi\equiv 2$).
 
We recall that a similar calculation (see Appendix \ref{ap:holes_in_css}) for the multipartite information $I^N_{\{\mathcal{A}^{(n_h)}_N\}}$ of the complete CSS $\{\mathcal{A}^{(n_h)}_N\}$ with multiple number of holes is observed to vanish
\begin{eqnarray}
I^N_{\{\mathcal{A}^{(n_h)}_N\}}&=& (-1)^{(N-1)} 
(\chi - 2)\log\mathcal{D} = 0~.
\end{eqnarray}
This is a manifestation of the result shown in eq.\eqref{eq:allzero}.

\par\noindent Thus, one can see the multipartite information measurements around different holes of the CSS embedded on the planar manifold (eq.\eqref{eq:alltopo}) are constrained through the interplay of subsystem topology (i.e, the number of holes in the CSS, $n_{h}$) and the Euler characteristic ($\chi$) of the underlying manifold.

\section{TEE and irreducible correlations}
\label{section:TEE_irreducible_correlations}

\par\noindent If a $N-$partite state has multipartite entanglement, then there should also exist signatures of multipartite quantum correlations among the subsystems. Indeed, such correlations can be of any order within the $N-$partite state, ranging from 2-particle to $N$-particle. Following Refs.\cite{Liden_Wootters_2002,Zhou_2008,Kim_2021,Zhou_You_2006,Liu_Zhou_2016,Kato_Murao_2016}, we now seek the connection between the multipartite information $I^{N}$ and the irreducible quantum correlations (defined below) for the simple annular arrangement of the CSS ($\{\mathcal{A}^{(a)}_N\}$; see Fig.\eqref{fig:island_subsystem_main}(a)) of a topologically ordered system. We start with a $N-$partite quantum state $\rho_{\cup_i A_i}$ in the state space $\mathcal{S}(\mathcal{H}^N)$, where $\mathcal{H}^N$ is the Hilbert space and $\{\mathcal{A}^{(a)}_N\}=\{A_1,~.~.~,A_N\}$ is the set of the subsystems forming an annular structure. The $k-$partite irreducible quantum correlation is defined for a $N$-partite state ($k\leq N$), and measures the correlation present purely in the $k-$particle reduced density matrix but not in the $l-$particle reduced density matrix for $l<k$. As we will see below, the $k$-partite irreducible quantum correlations can be defined \cite{Liden_Wootters_2002,Zhou_2008} by using the notion of a maximum entropy state~. 

\par\noindent We first define the set $R_k$
\begin{eqnarray}
R_k &=& \{\sigma~|~\forall a_k\subset \{\mathcal{A}^{(a)}_N\}, |a_k|=k: \sigma_{a_k}=\rho_{a_k}\}~ ~~~~~~~ \nonumber\\
&& \tilde{\rho}^{(k)}_{\{\mathcal{A}^{(a)}_N\}} \equiv  \underset{\sigma\in R_k}{\textrm{argmax}} ~ S(\sigma)~, 
\label{eq:R_k}
\end{eqnarray}
where $|a_k|$ is the cardinality of the set $a_k$ and $\mathscr{S}(\mathcal{H}^N)$ is the state corresponding to the Hilbert space $\mathcal{H}^N$. Thus, for the $N-$partite state $\rho_{\cup_i A_i}$, the irreducible $k-$party quantum correlation ($2\leq k \leq N$) is defined as 
\begin{eqnarray}
\mathscr{C}^{(k)}(\rho_{\cup_i A_i}) &=& S(\rho_{\cup_i A_i}^{(k-1)}) - S(\rho_{\cup_i A_i}^{(k)})~, 
\end{eqnarray}
and the total quantum correlation is given by $\mathscr{C}^T(\rho_{\cup_i A_i})=\sum_{k=2}^N \mathscr{C}^{(k)}(\rho_{\cup_i A_i})$. 

\par\noindent We now define the maximum entanglement entropy state $\tilde{\rho}_{\{\mathcal{A}\}}$ as 
\begin{eqnarray}
\tilde{\rho}_{\mathcal{A}} &=& \textrm{argmax}\{S(\sigma)~|~\sigma\in \mathcal{Q}_{\mathcal{A}}\}~,~\textrm{where}\nonumber\\
\mathcal{Q}_{\mathcal{A}} &=& \{ \sigma ~|~ a\subset \mathcal{A}, |a|=|\mathcal{A}|-1, \sigma_{a}=\rho_{a},  \}~.~~
\end{eqnarray}
The irreducible correlation for the $N-$partite system can now be defined in terms as $\tilde{\rho}_{\{\mathcal{A}\}}$ as follows
\begin{equation}
E_{IC}(\rho_{\cup_i A_i}) = S(\tilde{\rho}_{\cup_i A_i})-S(\rho_{\cup_i A_i})~.~
\label{eq:E_IC}
\end{equation}
For the case of a topologically ordered ground state, and an annular CSS of Fig.\eqref{fig:island_multiplyConnected}(a) under consideration, we can see from relations eq.\eqref{eq:R_k} and eq.\eqref{eq:E_IC} that $\mathscr{C}({\rho}_{\cup_i A_i}^{N-1})=E_{IC}({\rho}_{\cup_i A_i})$. 

\par\noindent In order to proceed towards building a link between the multipartite information $I^{N}$ and irreducible quantum correlations $E_{IC}$, we begin by rewriting the relation eq.\eqref{eq:I_N_recursion} for $I^{N}$ in terms of the total quantum correlation $C^{T}_{N}$ 
\begin{eqnarray}
I^{N}_{\{A\}}&=& \bigg[\displaystyle\sum_{\mu=1}^{N-2}(-1)^{\mu-1} \displaystyle\sum_{\substack{R\in \mathcal{B}_{N-\mu}(\{\mathcal{A}\})\\ |R|=N-\mu}} I^{|R|}_{R} \bigg] + (-1)^{N} \mathscr{C}^T_{N} ~.~~~~~
\label{eq:IN_recurssion_correlation}
\end{eqnarray}
 
For the simple annular CSS structure being considered, one can easily check that only the mutual information ($I^2$) terms for nearest neighbour subsystems will have non-zero values in eq.\eqref{eq:IN_recurssion_correlation}. This can be argued for as follows. Of the ${N\choose 2}$ possible mutual information terms, there are many terms $I^{2}_{\{A_i,A_j\}} = S(\rho_{A_i})+ S(\rho_{A_j}) - S(\rho_{A_i,A_j})$'s where $A_i$ and $A_j$ are not nearest neighbours. For such cases of two disjoint subsystems, one can write the density matrix $\rho_{A_i,A_j}=\rho_{A_i} \otimes \rho_{A_j}$. The mutual information corresponding to $A_i,A_j$ is clearly zero, as $S(\rho_{A_i,A_j})=S(\rho_{A_i})+ S(\rho_{A_j})$~. This leaves us with $N$ number of non-zero mutual information terms $I^{2}_{\{A_i,A_j\}}\neq 0$ (where $A_i$ and $A_j$ are nearest neighbours). Similarly, all $k$-partite information $I^{k}$ with $3\leq k\leq N-1$ must vanish, as no closed loops exist for these sets of $k$ subsystems. Taken altogether, we obtain a simplified expression of $I^N_{\{A\}}$
\begin{eqnarray}
&&I^{N}_{\{\mathcal{A}\}} = (-1)^{N-1} \bigg[\displaystyle\sum_{i=1}^{N} I^2_{\{A_i,A_{i+1\textrm{mod}N}\}} \bigg] +(-1)^N \mathscr{C}^T_N~,~\nonumber\\
&&~~= (-1)^{N-1} \bigg[ \displaystyle\sum_{i=1}^{N} \bigg( S_{A_i}-S_{A_{i}\cup A_{(i+1)\textrm{mod~N }}}  \bigg) + S_{\cup A_i} \bigg]~.~~~~~
\label{eq:IN_with_I2_simplified}
\end{eqnarray}
Using our earlier expression for $I^{N}_{\{\mathcal{A}\}}$ in terms of the TEE ($S_{topo}$), the total correlation $\mathscr{C}^{T}$ is given by
\begin{eqnarray}
\mathscr{C}^{T} &=& \sum_{i=2}^{N}\mathscr{C}^{(i)} ~=\bigg[\displaystyle\sum_{i=1}^{N} I^2_{\{A_i,A_{i+1\textrm{mod}N}\}}\bigg] - \chi S_{topo}~.~
\end{eqnarray}
\begin{figure}[!h]
\includegraphics[scale=0.4]{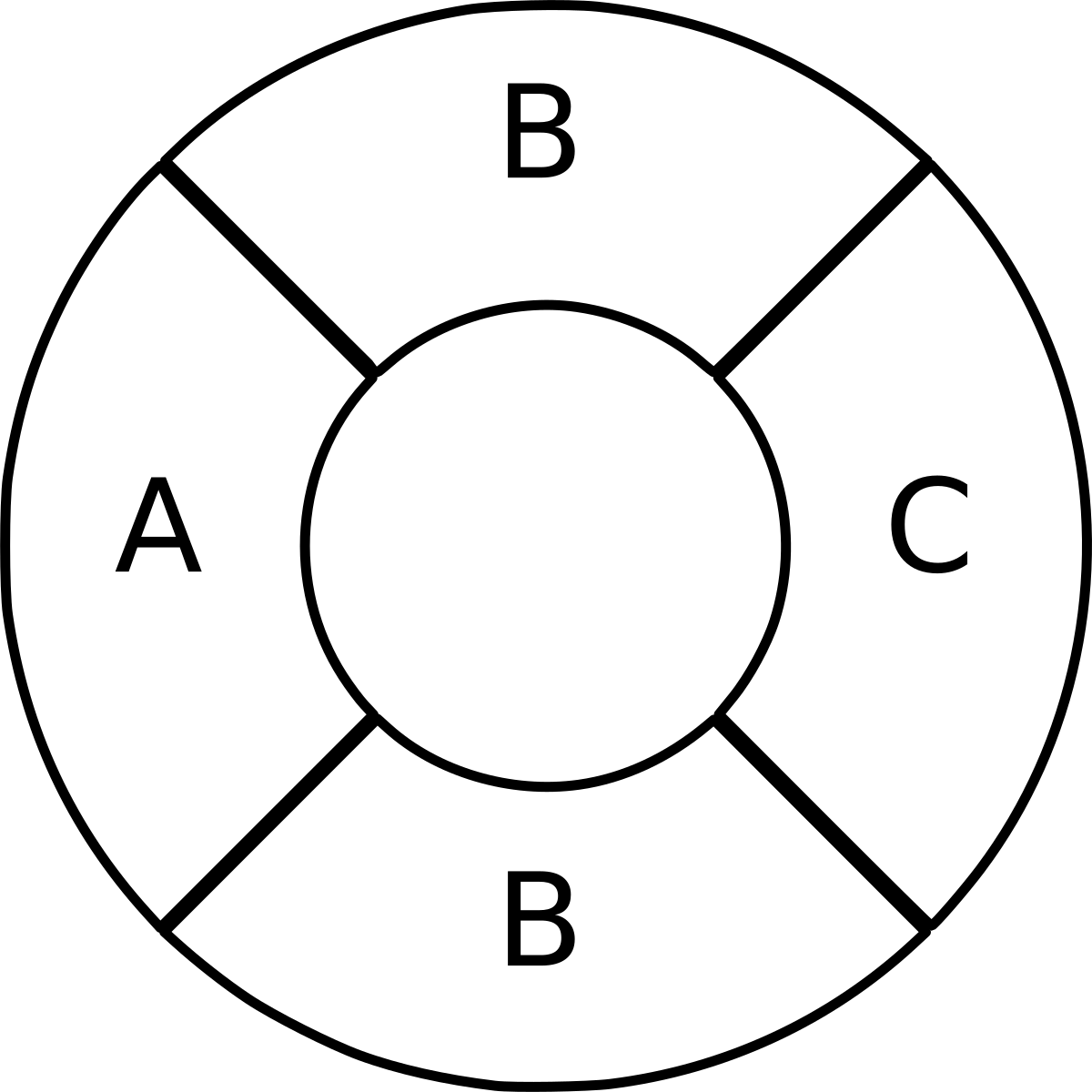}
\caption{CSS ($\{A^{(LW)}_{3}\}$) composed of three subsystems $A,B,C$, and where $B$ is comprised of two disjoint islands~\cite{Levin_Wen_2006}.}
\label{fig:kitaev_annular}
\end{figure}
\par\noindent For the case of the CSS considered in Refs.\cite{Kitaev_Preskill_2006,Levin_Wen_2006} (see Fig.\eqref{fig:kitaev_annular}), we can easily see that from our earlier results that
$I^3_{\{\mathcal{A}^{(LW)}_{3}\}}=-\chi\log \mathcal{D}$. Indeed, this result is in general agreement with the property of strong sub-additivity of von Neumann entanglement entropy for a CSS of $N=3$ subsystems~\cite{Liu_Zhou_2016,Kato_Murao_2016}: 
 $I^3_{\{\mathcal{A}^{(LW)}_{3}\}}\leq 0$~. Similarly, from eq.\eqref{eq:IN_with_I2_simplified}, we obtain a generalized strong sub-additivity relation for a CSS of $N>3$ subsystems in a topologically ordered phase

\begin{eqnarray}
S_{\cup A_i}  +\displaystyle\sum_{i=1}^{N} \bigg( S_{A_i}-S_{A_{i}\cup A_{(i+1)\textrm{mod ~N}}}  \bigg)=-\chi\log\mathcal{D} \leq 0~.~~~~~
\label{eq:strong_sub_add_main}
\end{eqnarray}

The equality in the condition (eq.\eqref{eq:strong_sub_add_main}) corresponds to the case of a topologically trivial phase ($\mathcal{D}=1$, obeying the boundary law entanglement entropy), while the inequality corresponds to the topologically nontrivial phases ($\mathcal{D}>1$). 

\par\noindent
Further, following a similar demonstration for a CSS of $N=3$ subsystems in Ref.\cite{Liu_Zhou_2016}, we obtain for a CSS of $N$ subsystems in a topologically ordered ground state that

\begin{eqnarray}
S(\tilde{\rho}_{\cup_i A_i})   &\leq& \displaystyle\sum_{i=1}^{N} \bigg[S(\tilde{\rho}_{A_{i}\cup A_{(i+1)\textrm{mod~N}}})   -S(\tilde{\rho}_{A_i})\bigg]~,~\nonumber \\
&\leq& \displaystyle\sum_{i=1}^{N} \bigg[S( {\rho}_{A_{i}\cup A_{(i+1)\textrm{mod ~N}}})   -S( {\rho}_{A_i})\bigg]~,~ 
\label{eq:rhotilde}
\end{eqnarray}

\noindent where we have used the fact that $S(\tilde{\rho}_{A_i}) = S( {\rho}_{A_i})$ for any individual subsystem. Now, by subtracting $S(\rho_{\cup_{i}A_{i}})$ from both sides of eq.\eqref{eq:rhotilde} and using eq.\eqref{eq:E_IC}, we obtain 
\begin{eqnarray}
 E_{IC}(\rho_{\cup_{i} A_{i}}) &\leq &\displaystyle\sum_{i=1}^{N} \bigg[S( {\rho}_{A_{i}\cup A_{(i+1)\textrm{mod~N}}})   -S( {\rho}_{A_i})\bigg] -S(\rho_{\cup_i A_i})~,~~~~~ \nonumber \\
 E_{IC}(\rho_{\cup_{i} A_{i}}) &\leq & | I^{N}_{\{\mathcal{A}^{(aD)}_{N}\}} | \equiv \chi S_{topo}~.~ \nonumber
\end{eqnarray}
\par\noindent Thus, we obtain that the $N^{th}$-order irreducible quantum correlation for the choice of subsystems $(\{{A}^{(aD)}_{N}\})$ is bounded from above by the product $\chi S_{topo}$
\begin{eqnarray}
\mathscr{C}^{(N)}(\rho_{\cup_i A_i})= E_{IC}(\rho_{\cup_{i} A_{i}}) &\leq & |I^N_{\{A^{(aD)}_{N}\}}|=\chi S_{topo}~. ~~~\label{eq:allcorr}
\end{eqnarray}
This result is the generalization of the $N=3$ case previously obtained in Refs.\cite{Liu_Zhou_2016,Kato_Murao_2016} for a topologically ordered ground state. 
\par\noindent 
Our results show that, for a non-zero $S_{topo}$, the entanglement Hamiltonian $\tilde{H}_{\rho_{\cup_{i}A_{i}}}\equiv \ln\rho_{\cup_{i}A_{i}}$ on region $\cup_{i}A_{i}$ cannot be a 2-local Hamiltonian \cite{Kato_Murao_2016}. Indeed, $\tilde{H}_{\rho_{\cup_{i}A_{i}}}$ must contain $N$-partite interactions that act on the entire region $\cup_{i}A_{i}$ of the annular CSS. Given that the number of subsystems $N$ is a variable, this suggests that the topologically ordered ground state contains quantum correlations of all orders among subsystems in the form of annular closed loops \cite{Wen_2013,Dora_Moessner_2018}. This is likely to be particularly relevant to the nature of the entanglement encoded in the $Z_{2}$ topologically ordered string loop condensed phases of models like the toric code \cite{Kitaev2006,ClaudioClaudio2007}.

\section{Discussion}
\label{section:discussion}

\par\noindent Topological entanglement entropy (denoted by $S_{topo}$) is a property unique to a topologically ordered system, and arises from the quantum dimension of its degenerate ground state manifold. In order to extract the TEE, we rely on an entanglement measure (e.g., multipartite information) that does not depend on the geometry of the subsystems employed in the measurement. Such an entanglement measure based on tripartite information ($I^{3}$) was proposed in Refs.\cite{Kitaev_Preskill_2006,Levin_Wen_2006}. Here, we have generalised this measure to the multipartite information ($I^{N}$) for an annular arrangement of $N$ subsystems. This has then helped unveil the dependence of $I^{N}$ on the topology of such an annular collection of the subsystems (CSS). Specifically, for all $N\geq 3$, we find that $I^{N}$ is a topological invariant given simply by $|I^{N}| = \chi S_{topo}$~, where $\chi (\equiv 2)$ is the Euler characteristic of the CSS embedded on the planar manifold.

\par\noindent We have also analysed carefully the robustness of $I^{N}$ to changes in the topology of the CSS from a simple annular structure: neither the inclusion of self-handles (or holes) within individual subsystems, nor handles between nearest neighbour subsystems, changes our result for $I^{N}$. While the inclusion of handles between subsystems beyond nearest neighbour causes $I^{N}$ to vanish, it becomes possible to identify similar multipartite information measures for several smaller annular CSS that again extract $\chi S_{topo}$. Thus, we conclude that one can very generally construct a simple annular structure of $N\geq 3$ subsystems, such that their $I^{N}$ can unambiguously capture $S_{topo}$. Further, we have also shown that for any complex CSS structure containing $n_{h}$ number of holes, the sum of the individual multipartite informations measured around each of the holes is constrained by the product $n_{h}\chi S_{topo}$.

\par\noindent  Further, we believe that our finding of an identical value of $I^{N}$ for all annular structures with $N\geq 3$ indicates the special nature of topologically ordered ground states. In 
order to quantify this, we define a $N-2$-component vector comprising the various $I^{N}$ ($N\geq 3$) multipartite informations as follows
\begin{equation}
\hat{\mathcal{E}}_{N}=~\mathcal{N}(|I^3_{\{A_3\}}|,|I^4_{\{A_4\}}|,~.~.~,|I^{N}_{\{A_{N}\}}|)~,
\end{equation}
where the normalisation factor $\mathcal{N} = {\sqrt{\frac{N-2}{\sum_{p=3}^{N}|I^p_{\{A_p\}}|^2 } }}$~.
We propose that $\hat{\mathcal{E}}_{N}$ can be used to classify various phases in terms of their multipartite entanglement content, as well as the phase transitions among them. For instance, we expect that metallic phases should be represented by $\hat{\mathcal{E}}_M=(0,0,~.~.~,0)$, i.e., the origin of $N-2$-dimensional multipartite information space. On the other hand, topologically ordered phases have been shown by us to be represented by the point $\hat{\mathcal{E}}_M=(1,1,~.~.~,1)$. It will be interesting to see where other phases lie within this unit hypercube.
\par\noindent Our investigations have also revealed that the $N$-party irreducible quantum correlations among the $N$ subsystems of a annular arrangement is bounded from above by $\chi S_{topo}$ for any $N$. The independence of this result on $N$ provides evidence of the fact that closed loop-like structures of all sizes are present within the ground state of a topologically ordered system. We believe that this is of relevance to understanding the patterns of entanglement encoded within the string loop condensed phases of topological quantum matter (see, e.g., Ref.\cite{Wen_2013} and references therein). It will also be an interesting challenge to extend these ideas to the topologically ordered phases that have been recently proposed by some of us in strongly correlated electronic (e.g., Mott liquid, Cooper pair insulator~\cite{MukherjeeMott1,MukherjeeMott2,MukherjeeNPB2,1dhubjhep,siddharthacpi}) and quantum spin systems in frustrated lattice geometries at finite fields~\cite{pal2019magnetization,pal2019,pal2020}.\\

\acknowledgments
\noindent The authors thank  D. Bhasin, K. Sinha, M. Podder and S. Bhattacharya for several discussions and feedback. S. P. thanks the CSIR, Govt. of India and IISER Kolkata for funding through a research fellowship. S.B. acknowledges SERB Matrics grant MTR/2017/000807 for the funding opportunity.

\appendix

\begin{widetext}

\section{The case of cycle graphs} 
\label{ap:cycle_graph}

\begin{figure}[!h]
\includegraphics[scale=0.28]{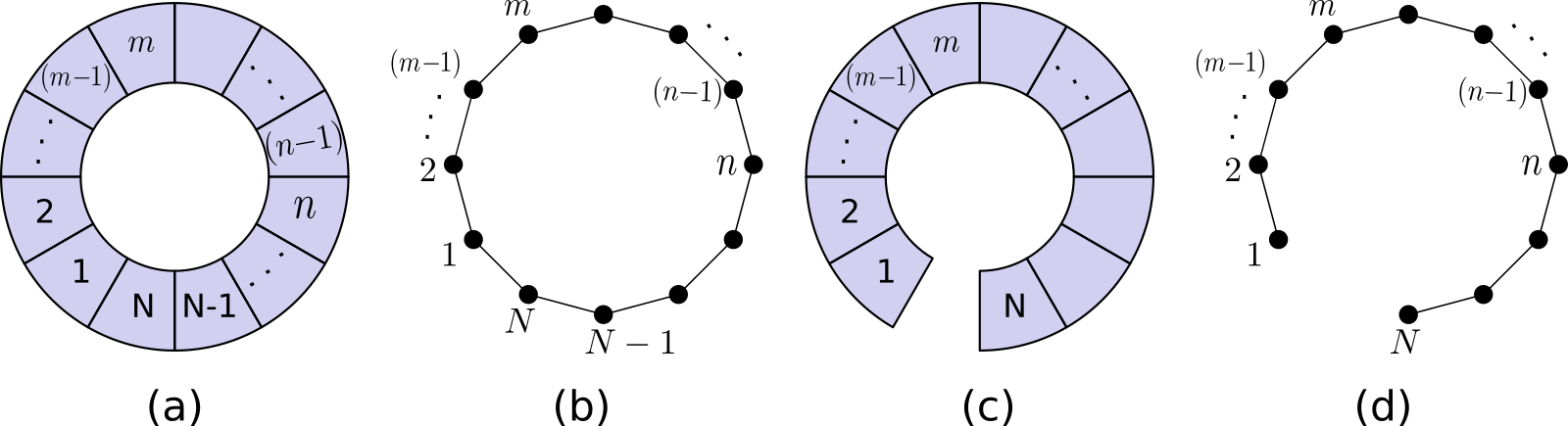}
\caption{Figures displaying the graphs equivalent to various CSS topologies. (a) The simple annular CSS ($\{\mathcal{A}^{(a)}_{N}\}$) with $N$ subsystems. (b) Graph ($\Gamma=\Upsilon(\{\mathcal{A}^{(a)}_{N}\})$) corresponding to the CSS ($\{\mathcal{A}^{(a)}_{N}\}$, Fig.(a)), where each subsystem $A_i$ is replaced by a node $i$ and the wall between two subsystems $A_i,A_{i+1}$ is replaced by an edge. $\Gamma=\Upsilon(\{\mathcal{A}^{(a)}_{N}\})$ has a total of $N$ nodes, as well as $N$ edges. (c) An open CSS ($\{\mathcal{A}^{(o)}_N\}$) with $N$ subsystems. (d) Graph ($\Gamma=\Upsilon(\{\mathcal{A}^{(o)}_N\})$) corresponding to the CSS $\{\mathcal{A}^{(o)}_N\}$, containing $N$ number of nodes and $N-1$ number of edges.}
\label{fig:ribbon_sceleton}
\end{figure}

\par\noindent 
We recall the definition $\mathcal{C}^N_{\{\mathcal{A}_N\}}$ for the CSS $\{\mathcal{A}_N\}$ where there is no overlap among all the $N$ subsystems
\begin{eqnarray}
\mathcal{C}^N_{\{\mathcal{A}_N\}}&=& \displaystyle\sum_{m=1}^{N} (-1)^{m-1} \displaystyle\sum_{Q \in \mathcal{B}_m(\{\mathcal{A}_N\})} {\mathcal{J}}_{V_{\cup}({Q})} \nonumber\\
&=& \Sigma_{\{\mathcal{A}_N\}} + (-1)^{N-1} \mathcal{J}_{V_{\cup}(\{\mathcal{A}_N\})}~,~\nonumber\\
\textrm{where,}~~ \Sigma_{\{\mathcal{A}_N\}} &=& \displaystyle\sum_{m=1}^{N-1} (-1)^{m-1} \displaystyle\sum_{Q \in \mathcal{B}_m(\{\mathcal{A}_N\})} {\mathcal{J}}_{V_{\cup}({Q})} ~,~~~~\nonumber
\label{eq:ap_CN}
\end{eqnarray}
where $\mathcal{J}_A$ represents the number of disconnected boundaries of the subsystem $A$. We define a graph $\Gamma$ corresponding to a CSS $\{\mathcal{A}_N\}$, $\Gamma=\Upsilon(\{\mathcal{A}_N\})$. Each subsystem $A_i$ in the CSS is replaced by a node ($i$), and each connectivity between two subsystems $A_i$ and $A_j$ is replaced by edges between corresponding two nodes $i$ and $j$ (e.g., Fig.\eqref{fig:ribbon_sceleton}. A graph is denoted by $\Gamma=(V(\Gamma), E(\Gamma))$, where $V$ is the set of vertices and $E$ is the set of edges. Let $v=|V|$ and $e=|E|$ denote the number of vertices and edges respectively. We shall now deal with subgraphs. In particular, recall that a subgraph $\Gamma'$ with a vertex set $S$ is called induced if any edge in $\Gamma$ joining two vertices in $S$ is also in the subgraph. We will be typically be dealing with nontrivial induced subgraphs, i.e., an induced subgraph where the vertex set is neither $\emptyset$ nor $V(\Gamma)$.\\

\begin{defn}
For a finite graph $\Gamma$, we define the integer
\begin{displaymath}
\rho(\Gamma):=\sum_{i=1}^{v-1} (-1)^i\sum_{\mathscr{F}_i}  H_0(\Gamma'),
\end{displaymath}
where $\mathscr{F}_i$ contains all induced subgraphs $\Gamma'$ of $\Gamma$ such that $v(\Gamma')=i$. The integer $H_0(\Gamma')$ is the number of connected components of $\Gamma'$. 
\end{defn}
\noindent We shall use $\mathscr{F}(\Gamma)$ to denote the collection of nontrivial induced subgraphs of $\Gamma$.  Calculating the number of disconnected boundaries of a subsystem $A$ is equivalent of calculating the number of connected components in the graph corresponding to the subsystem $A$. To be exact the relation between the number of disconnected boundaries of the subsystem and the number of connected components of the corresponding graph is given as,
\bgd
\displaystyle\sum_{Q \in \mathcal{B}_m(\{\mathcal{A}_N\})} {\mathcal{J}}_{V_{\cup}({Q})}=\sum_{\mathscr{F}_m}  H_0(\Gamma')~,~~\forall m<N
\edd
Thus we can see from the definitions above $\Sigma_{\{\mathcal{A}_N\}}=-\rho(\{\mathcal{A}_N\})$. Here we are interested in calculating $\Sigma_{\{\mathcal{A}_N\}}$ for a CSS $\{\mathcal{A}_N\}$.
\subsection{$\mathcal{C}^{N}_{\{\mathcal{A}^{(o)}_N\}}=0$, for an open structured CSS $\{\mathcal{A}^{(o)}_N\}$}
\label{ap:path_graph}
\pin Here we are interested in calculating $\mathcal{C}^{N}_{\{A^{(o)}_N\}}$. The graph corresponding to the CSS  ${\{A^{(o)}_N\}}$ is $\Upsilon(\{A^{(o)}_N\})=P_N$, i.e., a path graph with $N$ number of nodes.

\begin{prop}\label{inv-path}
For the path graph $P_m$ on $m\geq 3$ vertices, the invariant $\rho(P_m)=(-1)^{m-1}$. 
\end{prop}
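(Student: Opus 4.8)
The plan is to compute $\rho(P_m)$ directly from its definition by organizing the alternating sum according to the combinatorial structure of the induced subgraphs of a path. Label the vertices of $P_m$ as $1, 2, \ldots, m$, with an edge between $j$ and $j+1$. An induced subgraph is determined by its vertex set $S \subseteq \{1, \ldots, m\}$, and since $P_m$ is a path, the subgraph induced on $S$ is a disjoint union of subpaths, one for each maximal block of consecutive integers contained in $S$. Hence, for the subgraph induced on $S$, the number of connected components $H_0$ equals the number of such maximal runs in $S$. The first step, therefore, is to replace the abstract count $H_0$ by this explicit ``number of runs'' statistic.

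Next I would express the number of runs in $S$ as a sum of indicator functions over run-starts: a vertex $j \in S$ begins a new run precisely when $j = 1$ or $j - 1 \notin S$. Summing over all subsets of a fixed size $i$ and interchanging the order of summation, I count, for each candidate run-start position $j$, how many $i$-subsets have $j$ as a run-start. The boundary vertex $j = 1$ is a run-start whenever $1 \in S$, giving $\binom{m-1}{i-1}$ subsets; each interior vertex $j \geq 2$ requires $j \in S$ and $j-1 \notin S$, giving $\binom{m-2}{i-1}$ subsets, and there are $m-1$ such interior vertices. This yields the clean formula
\begin{equation}
\sum_{\substack{S \subseteq \{1,\ldots,m\} \\ |S|=i}} H_0 = \binom{m-1}{i-1} + (m-1)\binom{m-2}{i-1}~. \nonumber
\end{equation}

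Substituting into the definition of $\rho$ and reindexing with $k = i-1$, the computation splits into two alternating binomial sums over $k = 0, \ldots, m-2$. Using the partial-sum identity $\sum_{k=0}^{j}(-1)^k \binom{n}{k} = (-1)^j \binom{n-1}{j}$, the boundary contribution $\sum_{i=1}^{m-1}(-1)^i\binom{m-1}{i-1}$ evaluates to $(-1)^{m-1}$, while the interior contribution is a multiple of $\sum_{k=0}^{m-2}(-1)^k\binom{m-2}{k}$, which vanishes for $m-2 \geq 1$, i.e.\ exactly under the hypothesis $m \geq 3$. Adding the two pieces gives $\rho(P_m) = (-1)^{m-1}$, as claimed.

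The step I expect to require the most care is the counting of run-starts, and in particular the correct treatment of the asymmetry between the endpoint vertex $1$ and the interior vertices: it is precisely this asymmetry that produces the surviving $(-1)^{m-1}$ term, the interior vertices contributing nothing once $m \geq 3$. I would also emphasize that the definition restricts the outer sum to $i \leq v-1 = m-1$, so the full vertex set (which would contribute a single extra component) is deliberately excluded; had it been included, its term $(-1)^m$ would have cancelled the surviving contribution, so this truncation is essential to the stated value.
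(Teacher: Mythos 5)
Your proof is correct, but it follows a genuinely different route from the paper's. The paper argues recursively: it builds $P_{n+1}$ from $P_n$ by adjoining one vertex and one edge, sorts the nontrivial induced subgraphs of $P_{n+1}$ into three types according to whether they contain the vertices $n$ and $n+1$, and tracks how each type's contribution relates to pieces of $\rho(P_n)$ and $\rho(P_{n-1})$; after cancellations only $(-1)^n$ survives. You instead evaluate $\rho(P_m)$ in closed form: identifying induced subgraphs with vertex subsets $S$, recognizing $H_0$ as the number of maximal runs of consecutive integers, and double-counting run-starts to get the level-by-level formula
\begin{equation}
\sum_{\substack{S\subseteq\{1,\ldots,m\}\\ |S|=i}} H_0 \;=\; {m-1 \choose i-1} + (m-1){m-2 \choose i-1}~,\nonumber
\end{equation}
after which the alternating sum reduces to standard binomial identities. (I verified the formula and the two sums; e.g.\ for $m=3$ it gives $3$ and $4$ at levels $i=1,2$, so $\rho(P_3)=-3+4=1$, as required.) Your approach buys an explicit count at each level $i$, makes transparent exactly where the hypothesis $m\geq 3$ enters (the vanishing of $(1-1)^{m-2}$), and your closing observation --- that including the full vertex set $i=m$ would contribute $(-1)^m$ and annihilate the answer --- is a sharp way of seeing why the truncation at $i\leq v-1$ in the definition of $\rho$ is essential. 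What the paper's approach buys instead is reusable machinery: its three-type decomposition and the bookkeeping of how adjoining an edge changes $H_0$ is applied again, almost verbatim, in the subsequent corollary computing $\rho(C_n)=0$ for the cycle graph, where the cycle is obtained from the path by adding one edge. If you wanted your method to extend to that corollary, you would need a separate run-counting argument on the circle (runs of consecutive residues mod $n$), which is doable but not an immediate byproduct of what you wrote.
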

\begin{proof}
A path graph $P_n$ has $n$ vertices and $n-1$ edges. As a path graph is contractible, i.e., homotopy equivalent to any vertex, it follows that $H_0(P_n)=1$. Any induced subgraph $\Gamma$ is a disjoint union of path graphs. Therefore, if $\Gamma=\Gamma_1\sqcup \cdots\sqcup \Gamma_r$, then $H_0(\Gamma)=r$. We use induction to compute $\rho(P_n)$. Observe that $P_{n+1}$ is constructed from $P_n$ by adding an extra vertex labelled $n+1$ and an edge $e$ joining vertex $n$ to $n+1$. \\
\hf Notice that the nontrivial induced subgraphs $\Gamma$ of $P_{n+1}$ (for $n\geq 2$) are of three mutually exclusive and exhaustive types:\\
(a) $n+1\not\in V(\Gamma)$: These are actually induced subgraphs of $P_n$, including $P_n$ itself. \\
(b) $n, n+1\in V(\Gamma)$: These graphs are obtained from nontrivial induced subgraphs $\Gamma'$ of $P_n$ by adjoining $e$. Thus, $H_0(\Gamma)=H_0(\Gamma')$. \\
(c) $n+1\in V(\Gamma)$ but $n\not\in V(\Gamma)$: These graphs are obtained from induced subgraphs $\Gamma'$ of $P_{n-1}$, including $P_{n-1}$ and $\emptyset$, by adjoining the vertex $n+1$. Thus, $H_0(\Gamma)=H_0(\Gamma')+1$. \\
The invariant for $P_{n+1}$ can be computed from the three types of contributions as follows. Type (a) contributes $\rho(P_n)$ which is the sum of three quantities: \\
\hf $\alpha$ - the contribution from $\mathscr{F}(P_{n-1})$;\\
\hf $\beta$ - the contribution from $\mathscr{F}(P_n)$ containing vertex $n$;\\
\hf $(-1)^{n-1}$ - the contribution from $P_{n-1}$ itself;\\
\hf $(-1)^n$ - the contribution from $P_n$ itself.\\
As type (b) contributes $-\beta$, the total contribution from types (a) and (b) is $\alpha$. Type (c) contributes 
\begin{eqnarray}
&&-1+\sum_{i=1}^{n-1}  (-1) ^{i+1}\sum_{\Gamma'\in \mathscr{F}_i(P_{n-1})} (H_0(\Gamma')+1)\nonumber\\
&&=-1-\sum_{i=1}^{n-1}(-1)^i \sum_{\Gamma'\in \mathscr{F}_i(P_{n-1})} H_0(\Gamma')   -\sum_{i=1}^{n-1} (-1)^i|\mathscr{F}_i(P_{n-1})|\nonumber\\
&&~~~~ = -1+(-1)^n -\alpha-\sum_{i=1}^{n-1} (-1)^i {n-1 \choose i} \nonumber\\
&&~~~~ = (-1)^n-\alpha-\sum_{i=0}^{n-1} (-1)^i {n-1 \choose i} \nonumber\\
&&~~~~ = (-1)^n-\alpha.
\end{eqnarray}
Thus, $\rho(P_{n+1})=(-1)^n$, being the sum of contributions from (a), (b) and (c). Thus we find $\rho(P_n)=(-1)^{n-1}$.
\end{proof}

\pin Using the above relation we find that for an open structured CSS $\{\mathcal{A}^{(o)}_N\}$ as shown in the Fig.\ref{fig:ribbon_sceleton}(c,d),  $\Sigma_{\{\mathcal{A}^{(o)}_N\}}=(-1)^{N-1}$. Thus
\begin{eqnarray}
\mathcal{C}^{N}_{\{A^{(o)}_N\}}&=&\Sigma_{\{\mathcal{A}^{(o)}_N\}} + (-1)^{N-1}\mathcal{J}_{V_{\cup}(\{\mathcal{A}^{(o)}_{N}\})} ~,~\nonumber\\
&=& -\rho(P_N) + (-1)^{N-1}~,\nonumber\\
&=& (-1)^N+ (-1)^{N-1} = 0~,
\end{eqnarray}
Thus, it is proved that for an open structured CSS that the count $\mathcal{C}^N_{\{\mathcal{A}^{(o)}_N\}}=0$. Therefore, the multipartite information measure for this particular choice of CSS is given by $I^N_{\{\mathcal{A}^{(o)}_N\}}=-\mathcal{C}^N_{\{\mathcal{A}^{(o)}_N\}} \log \mathcal{D}=0$.

\subsection{$\Sigma_{\{\mathcal{A}^{(a)}_N\}}= 0$ for an closed structured CSS $\{\mathcal{A}^{(a)}_N\}$}
\label{ap:cycle_graph1}
\pin We now calculate $\Sigma_{\{\mathcal{A}^{(a)}_N\}}$ for the closed annular structured CSS $\{\mathcal{A}^{(a)}_N\}$. The corresponding graph is $\Upsilon(\{\mathcal{A}^{(a)}_N\})=C_N$, i.e., the cycle graph with $N$ number of vertices and nodes.

\begin{cor}
For the cycle graph $C_n$, we have $\rho(C_n)=0$.
\end{cor}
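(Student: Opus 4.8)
The plan is to prove $\rho(C_n)=0$ by relating the cycle graph $C_n$ to the path graph $P_{n-1}$ via the same vertex-deletion strategy employed in Proposition \ref{inv-path}, exploiting the fact that the already-established value $\rho(P_m)=(-1)^{m-1}$ is available to us. The key structural observation is that $C_n$ is obtained from $P_n$ by adding a single edge joining the two endpoints (vertices $1$ and $n$). Since an induced subgraph of $C_n$ is determined by its vertex set, the induced subgraphs of $C_n$ fall into two mutually exclusive and exhaustive classes according to whether they contain \emph{both} endpoints of the newly added ``closing'' edge or not.

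First I would classify the nontrivial induced subgraphs $\Gamma'$ of $C_n$ into two types. Type (a): those whose vertex set omits at least one of the two endpoints $\{1,n\}$ of the closing edge. For such a subgraph the closing edge is absent, so $\Gamma'$ is \emph{also} an induced subgraph of $P_n$ with the identical number of connected components $H_0(\Gamma')$. Type (b): those containing both endpoints $1$ and $n$. For these the closing edge is present, and its inclusion can merge two components into one whenever vertices $1$ and $n$ lie in distinct components of the corresponding induced subgraph of $P_n$. The summation $\rho(C_n)$ therefore splits as the contribution of $P_n$'s induced subgraphs minus a correction that accounts precisely for those type-(b) configurations where the closing edge reduces $H_0$ by one.

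Next I would compute these two pieces and show they cancel. The type-(a) contribution, by construction, reproduces the alternating sum defining $\rho(P_n)$ restricted to subgraphs missing an endpoint, while the type-(b) contribution is an alternating sum over vertex subsets containing both $1$ and $n$, weighted by $H_0$ computed \emph{with} the closing edge. The cleanest route is to write $\rho(C_n)=\rho(P_n) - \Delta$, where $\Delta=\sum_i (-1)^{i}\sum_{\Gamma'} [\,H_0^{P_n}(\Gamma') - H_0^{C_n}(\Gamma')\,]$ over type-(b) subgraphs, and observe that $H_0^{P_n}(\Gamma')-H_0^{C_n}(\Gamma')$ equals $1$ exactly when deleting the arc of $\Gamma'$ not containing the closing edge disconnects $1$ from $n$ (otherwise $0$). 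Counting such configurations reduces to an alternating binomial sum of the form $\sum_{i}(-1)^i \binom{\cdot}{\cdot}$, which vanishes by the standard identity $\sum_{i=0}^{k}(-1)^i\binom{k}{i}=0$ already invoked repeatedly in the main text. Combined with $\rho(P_n)=(-1)^{n-1}$ from Proposition \ref{inv-path}, the remaining nonvanishing boundary terms must be arranged so that $\Delta=(-1)^{n-1}$, yielding $\rho(C_n)=0$.

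The main obstacle I anticipate is the bookkeeping of type-(b) subgraphs: correctly tracking when the closing edge genuinely merges two components versus when vertices $1$ and $n$ already lie in a common component of the path-induced subgraph. An induced subgraph of $C_n$ containing both endpoints corresponds to choosing a subset of the ``interior'' vertices $\{2,\dots,n-1\}$, and whether $1$ and $n$ are path-connected within it depends on whether the interior selection leaves a gap adjacent to either endpoint. Disentangling these cases into clean alternating sums—and verifying that the spurious boundary contributions (analogous to the ``$P_{n-1}$ itself'' and ``$\emptyset$'' terms that appeared in the path computation) conspire to give exactly $(-1)^{n-1}$—is where the careful combinatorial accounting will be needed. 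Once that cancellation is isolated, the conclusion $\rho(C_n)=0$, and hence $\Sigma_{\{\mathcal{A}^{(a)}_N\}}=0$, follows immediately.
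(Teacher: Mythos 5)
Your proposal is correct and follows essentially the same route as the paper's proof: both realize $C_n$ as $P_n$ plus a closing edge between vertices $1$ and $n$, split the induced subgraphs according to whether both endpoints are present, and cancel $\rho(P_n)=(-1)^{n-1}$ (from Proposition \ref{inv-path}) against a truncated alternating binomial sum arising from the both-endpoint subgraphs. The one detail your sketch leaves slightly loose---when the closing edge actually merges components---resolves immediately: any \emph{proper} induced subgraph of $P_n$ containing both $1$ and $n$ must omit some interior vertex, so $1$ and $n$ always lie in distinct path-components and the edge lowers $H_0$ by exactly $1$; hence your $\Delta$ is the pure signed count $\sum_{i=2}^{n-1}(-1)^i\binom{n-2}{i-2}=(-1)^{n-1}$, which is precisely the correction the paper computes in passing from $\gamma$ to $\tilde{\gamma}$.
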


\begin{proof}
Recall that the cycle graph $C_n$ is a graph on $n$ vertices and $n$ edges (Fig.\ref{fig:ribbon_sceleton}(b)), such that each vertex has valency two. This graph is usually visualized as the boundary of a regular $n$-gon. Observe that $C_n$ is obtainable from $P_n$ by attaching an edge $e$ joining $1$ and $n$. The induced subgraphs $\Gamma$ of $P_{n}$ (for $n\geq 3$) are of three mutually exclusive and exhaustive types:\\
\hf (a) $1,n\not\in V(\Gamma)$: let $\alpha$ be the contribution from these towards $\rho(P_n)$;   \\
\hf (b) exactly one of $1$ and $n$ is in $V(\Gamma)$: let $\beta$ be the contribution from these towards $\rho(P_n)$;   \\
\hf (c) $1,n\in V(\Gamma)$: let $\gamma$ be the contribution from these towards $\rho(P_n)$.  \\
In particular, we have $\alpha+\beta+\gamma=\rho(P_n)=(-1)^{n-1}$. Now note that type (a) and (b) are induced subgraphs of $C_n$; the contribution of these types towards $\rho(C_n)$ will be $\alpha$ and $\beta$ respectively. The other induced subgraphs are modifications of those in (c) - we have to add the edge $e$ in order to type (c) subgraphs. Adding an edge decreases the number of connected components by $1$, whence 
\begin{eqnarray*}
\tilde{\gamma} & = & \sum_i \sum_{\{\Gamma\subset P_n\,|\,1,n\in V(\Gamma),v(\Gamma)=i\}} (-1)^i (H_0(\Gamma)-1)\\
& = & \sum_i \sum_{\{\Gamma\subset P_n\,|\,1,n\in V(\Gamma),v(\Gamma)=i\}} (-1)^i  H_0(\Gamma) -\sum_i \sum_{\{\Gamma\subset P_n\,|\,1,n\in V(\Gamma),v(\Gamma)=i\}} (-1)^i \\
& = & \gamma-\sum_{i=2}^{n-1} (-1)^i {n-2 \choose i-2}\\
& = & \gamma+(-1)^{n-2}.
\end{eqnarray*}
Adding $\alpha,\beta$ and $\tilde{\gamma}$ we obtain $\rho(C_N)=\alpha+\beta+\tilde{\gamma}=0$. Thus $\Sigma_{\{\mathcal{A}^{(a)}_{N}\}}=-\rho(C_N)=0$.
\end{proof}

\section{Simple annular structure and Euler characteristic}
\label{ap:simpleannular_euler}
\par\noindent 
For a simple annular structure of CSS $(\{\mathcal{A}^{(a)}_N\})$ shown in the Fig.\ref{fig:island_subsystem_main}(a), we can calculate the multipartite information by using eq.\eqref{eq:I_N_recursion} as follows
\begin{eqnarray}
I^{N}_{\{\mathcal{A}^{(a)}_N\}}&=& \bigg[\sum_{\substack{j\\ \mu_j\geq 3}} (-1)^{(N-1)-\mu_j} I^{\mu_j}_{\{\mathcal{A}_{\mu_j}\}}\bigg] + \bigg[ (-1)^{(N-1)-2} \sum_{\{A_i,A_j\}\in \{M\}} I^2_{A_i,A_j}\bigg]+ (-1)^{N} \bigg[\displaystyle\sum_{i} S_{A_i} - S_{\cup A_i}  \bigg] ~.~ 
\end{eqnarray}
For such a simple annular structure, we obtain a vanishing multipartite information for all CSS composed of $m(<N)$ number of subsystems that do not form closed loops. Thus, $\sum_{\substack{j\\ \mu_j\geq 3}} (-1)^{(N-1)-\mu_j} I^{\mu_j}_{\{A_j\}} =0$, and we obtain
\begin{eqnarray}
I^{N}_{\{\mathcal{A}^{(a)}_N\}}&=&(-1)^{N} \bigg[ d_{nn}-N +n_h+1  \bigg] \log\mathcal{D} ~. ~ 
\end{eqnarray}
For the simple annulus, $N=d_{nn}$, the number of edges is $e=N$, the number of vertices is $v=N$ and the number of faces is $f=n_h+1$. This leads to $\chi=e-v+f=n_h+1=2$ (confirming the value of the Euler characteristic for the planar manifold on which the annulus is embedded). Thus, we can write the multipartite information measure for simple annular structure as
\begin{eqnarray}
I^{N}_{\{\mathcal{A}^{(a)}_N\}}&=&(-1)^{N} \bigg[ d_{nn}-N +n_h+1  \bigg] \log\mathcal{D} ~ ~ \nonumber\\
&=& (-1)^N \chi \log\mathcal{D}= (-1)^N \chi S_{topo}~.
\end{eqnarray}

\section{Isolated structure}
\label{ap:isolated_Structure}
\par\noindent 
We now turn to the case of a CSS that does not form a closed loop, $\{\mathcal{A}^{(1b)}_{N}\}=\{\mathcal{A}^{(1b),N}_{N-1}\}\cup \{A_N\}$. Using eq.\eqref{eq:N-partite-connectivity}, one can easily see that

\begin{eqnarray}
&&\mathcal{C}^N_{\{\mathcal{A}^{(1b)}_N\}} = \mathcal{C}^{N-1}_{\{\mathcal{A}^{(1b)}_N\}-\{A_N\}}+\mathcal{J}_{A_N} -\displaystyle\sum_{i=1}^{N-1} \mathcal{J}_{A_N\cup A_i}~~~+ \displaystyle\sum_{i<j=1}^{N-1} \mathcal{J}_{A_N\cup A_i\cup A_j} \cdots +(-1)^{N-1}\mathcal{J}_{A_N\cup A_1\cup ~\cdots~ \cup A_{N-1}}~.
\end{eqnarray}
\noindent Using the fact that subsystem $A_N$ is disjoint from the rest of the system, we can see that $\mathcal{J}_{A_N\cup \{\mathcal{B}\}}=\mathcal{J}_{A_N}+\mathcal{J}_{\{\mathcal{B}\}}$. Thus, we can simplify the above equation as follows
\begin{eqnarray}
\mathcal{J}_{A_N} &&-\displaystyle\sum_{i=1}^{N-1} \mathcal{J}_{A_N\cup A_i}~~~+ \displaystyle\sum_{i<j=1}^{N-1} \mathcal{J}_{A_N\cup A_i\cup A_j}+ \cdots +(-1)^{N-1}\mathcal{J}_{A_N\cup A_1\cup ~\cdots~ \cup A_{N-1}}\nonumber\\
&=& \mathcal{J}_{A_N} -\displaystyle\sum_{i=1}^{N-1} (\mathcal{J}_{A_N}+\mathcal{J}_{A_i})~~~+ \displaystyle\sum_{i<j=1}^{N-1} (\mathcal{J}_{A_N}+\mathcal{J}_{A_i\cup A_j})+ \cdots +(-1)^{N-1}(\mathcal{J}_{A_N}+\mathcal{J}_{A_1\cup ~\cdots~ \cup A_{N-1}})\nonumber\\
&=& \mathcal{J}_{A_N} \displaystyle\sum_{i=0}^{N-1} (-1)^{i} {N-1 \choose i}  -\displaystyle\sum_{i=1}^{N-1} \mathcal{J}_{A_i}~~~+ \displaystyle\sum_{i<j=1}^{N-1} \mathcal{J}_{A_i\cup A_j} +\cdots +(-1)^{N-1}\mathcal{J}_{A_1\cup ~\cdots~ \cup A_{N-1}}\nonumber\\
&=&  \mathcal{J}_{A_N}\Upsilon_{N-1} -\mathcal{C}^{N-1}_{\{\mathcal{A}^{(1b)}_N\}-\{A_N\}}~,
\end{eqnarray}
where $\Upsilon_{N-1}=\displaystyle\sum_{i=0}^{N-1} (-1)^{i} {N-1 \choose i}=0$. In turn, we obtain
\begin{eqnarray}
\mathcal{C}^N_{\{\mathcal{A}^{(1b)}_N\}}&&= \mathcal{C}^{N-1}_{\{\mathcal{A}^{(1b)}_N\}-\{A_N\}}-\mathcal{C}^{N-1}_{\{\mathcal{A}^{(1b)}_N\}-\{A_N\}}+\mathcal{J}_{A_N} \Upsilon_{N-1} = 0~.~
\end{eqnarray}
Hence, the multipartite information measure for this CSS is seen to vanish: $I^N_{\{\mathcal{A}^{(1b)}_N\}}=-\mathcal{C}^N_{\{\mathcal{A}^{(1b)}_N\}} \log \mathcal{D}=0$.

\section{Annular structure with appendage}
\label{ap:Appendage}
\par\noindent
As shown in Fig.\ref{fig:island_subsystem_main}(d), we consider here a CSS with $N$ number of subsystems and containing an appendage (the $N$th subsystem). In order to compute the multipartite information for this CSS, $I^N_{\{\mathcal{A}^{(1d)}_{N}\}}$, we use eq.\eqref{eq:I_N_recursion} 
\begin{eqnarray}
I^{N}_{\{\mathcal{A}^{(1d)}_{N}\}}&=& \displaystyle\sum_{\mu=1}^{N-2}(-1)^{\mu-1} \displaystyle\sum_{\substack{R\in \mathcal{B}_{N-\mu}(\{\mathcal{A}^{(1d)}_{N}\})\\ |R|=N-\mu}} I^{|R| }_{R}+ (-1)^{N} \bigg(  \displaystyle\sum_{i} S_{A_i} - S_{\cup A_i}  \bigg)~.
\end{eqnarray}
We can see that except for $I^{N-1}_{\{A_1,~.~.~, A_{N-1}\}}$, all the terms $I^m_{\{..\}}$ for $2<m<N$ are zero. This is because they either form an open line, or composed of isolated islands. Further, we have already shown that the CSS of an open line structure, or one composed of isolated islands, gives a vanishing multipartite information. Thus, the above equation reduces to 
\begin{eqnarray}
I^{N}_{\{\mathcal{A}^{(1d)}_{N}\}}&=& (-1)^{N-1}I^3_{\{\mathcal{A}^{(1d)}_N\}} +(-1)^{N-1} \displaystyle\sum_{\substack{R\in \mathcal{B}_{2}(\{\mathcal{A}^{(1d)}_{N}\})\\ |R|=2}} I^{|R| }_{R}+ (-1)^{N} \bigg(  \displaystyle\sum_{i} S_{A_i} - S_{\cup A_i}  \bigg)~. ~
\end{eqnarray}
Now, we know that $I^2_{A_i,A_j}=0$ if $\mathcal{J}_{A_i}=\mathcal{J}_{A_j}=1$ and $\mathcal{J}_{A_i\cup A_j}=2$ $\forall i,j$. This shows that when two subsystems are not touching each other, i.e., they are disjoint, their joint density matrix can be decomposed into a product form: $\rho_{A_i\cup A_j}=\rho_{A_i}\otimes \rho_{A_j}$. Using this for the case of Fig.\ref{fig:island_subsystem_main}(d), we obtain
\begin{eqnarray}
I^{N}_{\{\mathcal{A}^{(1d)}_{N}\}}&=& (-1)^{N-1}(-1)^{3}\log\mathcal{D}^2 +(-1)^{N-1} \bigg[-N\log\mathcal{D}\bigg]+ (-1)^{N} \bigg( -N\log\mathcal{D} +2\log\mathcal{D}  \bigg) ~ \nonumber\\
&=& (-1)^{N-1}\bigg[ -2-N-(-N+2) \bigg] \log\mathcal{D}=0~.~
\end{eqnarray}
The above result shows that adding an appendage subsystem to a simple annular structure trivializes the computation of the multipartite information $I^{N}$, and is unable to capture the topological entanglement entropy.

\section{Many holes in the CSS}
\label{ap:holes_in_css}

\begin{figure}[!h]
\includegraphics[scale=0.6]{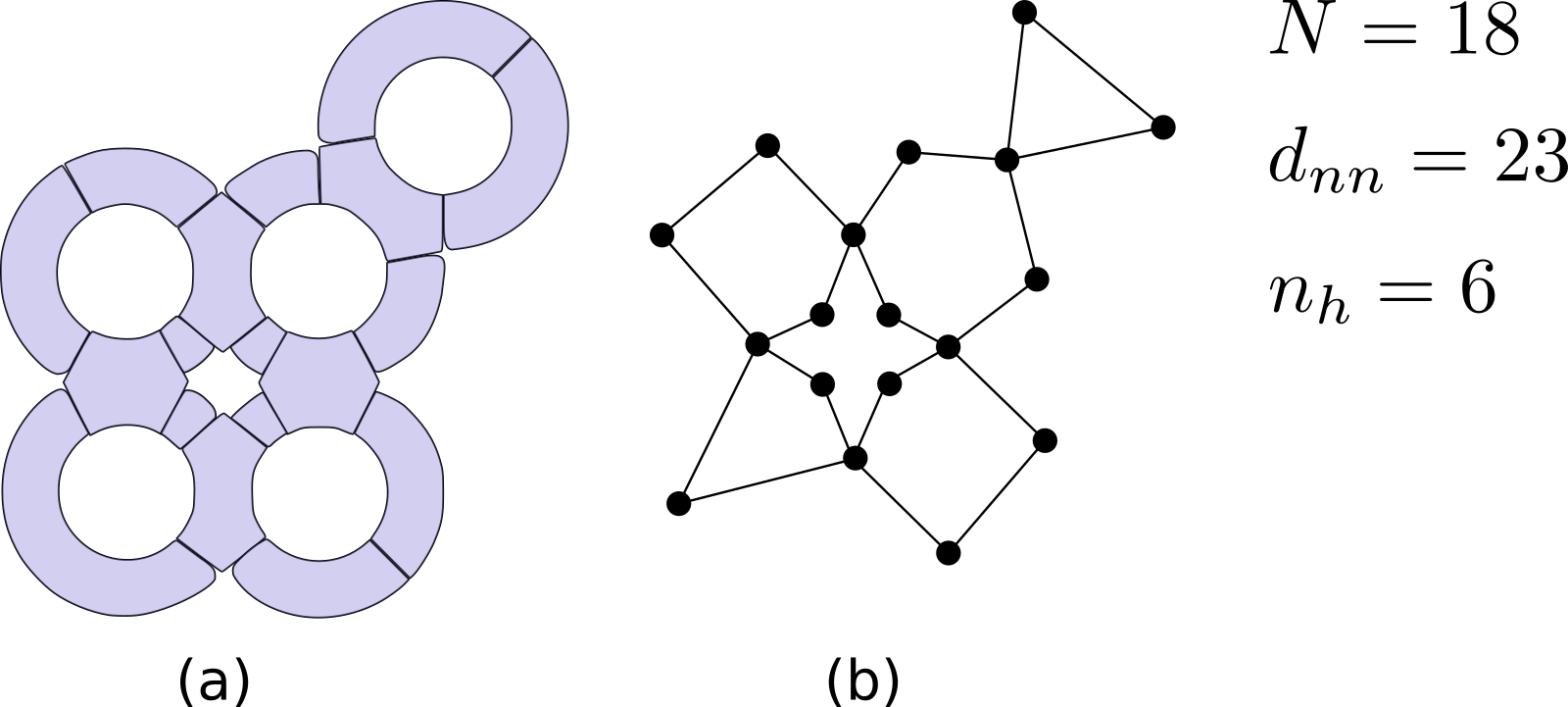}
\caption{In general, a CSS has $n_h$ number of holes, $N$ subsystems and$d_{nn}$ partitions. {(a)} An example of a CSS ($\{\mathcal{A}^{(6)}_{N}\}$) with $n_h=6$. {(b)} Graph corresponding to $\{\mathcal{A}^{(6)}_{N}\}$, where each vertex represents a subsystem and each edge represents a partition.}
\label{fig:genus-6}
\end{figure}
\par\noindent 
We now discuss the case where the $N$ subsystems are arranged in such a way that the CSS has $n_h$ number of holes, denote as $\{\mathcal{A}^{(n_h)}_N\}$. An example is given in Fig.\eqref{fig:genus-6}, where the CSS has $n_h=6$ number of holes. As before, we are interested in calculating the multipartite information $I^N_{\{\mathcal{A}^{(n_h)}_N\}}$ by using eq.\eqref{eq:I_N_recursion}. Here, we are taking the simple case where an individual subsystem has a single disconnected boundary $\mathcal{J}_{A_i}=1~, \forall A_i\in\{\mathcal{A}^{(n_h)}_N\}$. There are $d_{nn}$ number of pairs or subsystems ($A_i,A_j$) where $\mathcal{J}_{A_i\cup A_j}=\mathcal{J}_{A_i}=\mathcal{J}_{A_j}=1$. Thus, we obtain
\begin{eqnarray}
I^{N}_{\{\mathcal{A}^{(n_h)}_N\}}&=& \bigg(\displaystyle\sum_{\mu=1}^{N-2}(-1)^{\mu-1} \displaystyle\sum_{\substack{R\in \mathcal{B}_{N-\mu}(\{\mathcal{A}\})\\ |R|=N-\mu}} I^{|R| }_{R} \bigg) + (-1)^{N} \bigg(  \displaystyle\sum_{i} S_{A_i} - S_{\cup A_i}  \bigg) ~. ~~ 
\label{eq:ap_id_11}
\end{eqnarray}
This relation shows that $I^{N}$ is comprised of many different multipartite information terms that differ in the numbers of subsystems involved. From our earlier discussions, the only nontrivial multipartite informations are those that correspond to an annular CSS. Now, one can create an annular CSS (formed out of  say $\mu_j$ number of subsystems) around each hole ($j$); we denote these CSS as $\{A_i\}$. Then, $I^{\mu_j}_{\{A_j\}}= (-1)^{\mu_j} \log\mathcal{D}^2$. Similarly, the only nontrivial mutual informations are those where both subsystems are touching one another: $I^2_{A_i,A_j}=-\log\mathcal{D}, \forall i,j$ if $\mathcal{J}_{A_i\cup A_j}=\mathcal{J}_{A_i}=\mathcal{J}_{A_j}=1$, and we represent the set of all such pairs of subsystems as $\{M\}$ (with cardinality $|\{M\}|=d_{nn}$). Using this rule, we can obtain the $n_h$ number of non-zero multipartite informations in the above eq.\eqref{eq:ap_id_11}
\begin{eqnarray}
I^{N}_{\{\mathcal{A}^{(n_h)}_N\}}&=& \bigg[\sum_{j} (-1)^{(N-1)-\mu_j} I^{\mu_j}_{\{A_j\}}\bigg] + \bigg[ (-1)^{(N-1)-2} \sum_{\{A_i,A_j\}\in \{M\}} I^2_{\{A_i,A_j\}}\bigg]+ (-1)^{N} \bigg[\displaystyle\sum_{i} S_{A_i} - S_{\cup A_i}  \bigg] ~.~ 
\label{eq:ap_1246}
\end{eqnarray}
We focus on the case where $\mathcal{J}_{A_i}=1~, \forall i$, the topological part of $S_{A_i}$ is $-\log \mathcal{D}$, and the corresponding topological part of $S_{\cup_iA_i}$ is $-(n_h+1)\log\mathcal{D}$ (as it has $n_h+1$ number of disconnected boundaries). Thus, we can further simplify the above relation as
\begin{eqnarray}
I^{N}_{\{\mathcal{A}^{(n_h)}_N\}}&=& \bigg[\sum_{j}^{n_h} (-1)^{(N-1) }   2\log\mathcal{D}\bigg] + \bigg[ (-1)^{(N-1)-2}(-1)  d_{nn} \log \mathcal{D}\bigg]+ (-1)^{N} \bigg(  -N\log\mathcal{D} + (n_h+1)\log \mathcal{D}  \bigg) ~ ~~ \nonumber\\
&=& \bigg[  (-1)^{(N-1) }   2{n_h}\log\mathcal{D}\bigg] + \bigg[ (-1)^{N }  d_{nn} \log \mathcal{D}\bigg]+ (-1)^{N} \bigg[  -N\log\mathcal{D} + (n_h+1)\log \mathcal{D}  \bigg] ~ ~~ \nonumber\\
&=& (-1)^{(N-1)} \bigg[ {n_h} - d_{nn}+N  - 1 \bigg]\log\mathcal{D}=(-1)^{(N-1)} \bigg( \chi  - 2 \bigg)\log\mathcal{D} ~, ~~ 
\label{eq:ap_id_22}
\end{eqnarray}
where $\chi$ is the Euler characteristic of the underlying spatial manifold. As this manifold is planar in our case, we know that $\chi=2$. Hence, the above equation vanishes very generally. We can also easily verify that this relation vanishes for the specific case shown in Fig.\eqref{fig:genus-6}: $N=18$, $d_{nn}=23$, $n_h=6$, giving $N-d_{nn}+n_h-1=0=I^{N}_{\{\mathcal{A}^{(6)}_N\}}$.

\section{Recursion in multipartite information}\label{app:recursionIN}
\noindent Our goal here is to prove very generally the following relation 
\begin{eqnarray}
I^{N}_{\{\mathcal{A}_N\}} =  \displaystyle\sum_{\mu=1}^{N-2}(-1)^{\mu-1} \displaystyle\sum_{\substack{R\in \mathcal{B}_{N-\mu}(\mathcal{A})\\ |R|=N-\mu}} I^{|R| }_{R}+ (-1)^{N} \bigg(  \displaystyle\sum_{i} S_{A_i} - S_{\cup A_i}  \bigg).
\label{eq:tobeproved_master}
\end{eqnarray}
This equation shows the relation of $N-$partite information with various lower-order multipartite informations. Using the fact that $I^1_{A_i}=S_{A_i}$, we can re-write the above equation as 
\begin{eqnarray}
I^{N}_{\{\mathcal{A}_N\}} &=&  \displaystyle\sum_{\mu=1}^{N-1}(-1)^{\mu-1} \displaystyle\sum_{\substack{R\in \mathcal{B}_{N-\mu}(\mathcal{A})\\ |R|=N-\mu}} I^{|R| }_{R}+ (-1)^{N-1} S_{\cup A_i}~,\nonumber \\
\displaystyle\sum_{\mu=0}^{N-1}(-1)^{\mu} \displaystyle\sum_{\substack{R\in \mathcal{B}_{N-\mu}(\mathcal{A})\\ |R|=N-\mu}} I^{|R| }_{R} &=& (-1)^{N-1} S_{\cup A_i}~. 
\label{eq:tobeproved1}
\end{eqnarray}
We now prove eq.\eqref{eq:tobeproved1}. Using the definition of the multipartite information \eqref{eq:I_N_definition}, we can write
\begin{eqnarray}
\displaystyle\sum_{\substack{R\in \mathcal{B}_{m}(\mathcal{A})\\ |R|=m}} I^{m }_{R}  &=&  \displaystyle\sum_{\substack{R\in \mathcal{B}_{1}(\mathcal{A}) }} S_{R}  {N-1 \choose m-1} - \displaystyle\sum_{\substack{R\in \mathcal{B}_{2}(\mathcal{A}) }} S_{R}  {N-2 \choose m-2} + \cdots + (-1)^{m-1}  \displaystyle\sum_{\substack{R\in \mathcal{B}_{m}(\mathcal{A}) }} S_{R}  {N-2 \choose 0}~, \nonumber\\
&=& \displaystyle\sum_{\mu=1}^{m} (-1)^{\mu-1}  \displaystyle\sum_{\substack{R\in \mathcal{B}_{\mu}(\mathcal{A}) }} S_{R}  {N-\mu \choose m-1}~. 
\end{eqnarray}
Using this equation, we obtain
\begin{eqnarray}
&& \displaystyle\sum_{\mu=0}^{N-1}(-1)^{\mu} \displaystyle\sum_{\substack{R\in \mathcal{B}_{N-\mu}(\mathcal{A})\\ |R|=N-\mu}} I^{|R| }_{R} = (-1)^{N-1} \displaystyle\sum_{m=1}^{N} (-1)^{m-1} \displaystyle\sum_{\substack{R\in \mathcal{B}_{m}(\mathcal{A})\\ |R|=m}} I^{m }_{R}  \nonumber\\
&=&\displaystyle\sum_{\substack{R\in \mathcal{B}_{\mu}(\mathcal{A}) \\ \mu=1}}^{N-1} (-1)^{N-1} S_{R} \bigg[{N-\mu \choose N-\mu}- {N-\mu \choose N-\mu-1} +{N-\mu \choose N-\mu-2}-\cdots + (-1)^{N-\mu} {N-\mu \choose 0} \bigg]  +(-1)^{N-1} S_{\cup_i A_i}~,\nonumber\\
&=&  (-1)^{N-1} S_{\cup_i A_i}~,
\end{eqnarray}
where we have used the identity 
\begin{eqnarray*}
\bigg[{N-\mu \choose N-\mu}- {N-\mu \choose N-\mu-1} +{N-\mu \choose N-\mu-2}-\cdots + (-1)^{N-\mu} {N-\mu \choose 0} \bigg]=0~,~\forall N>\mu \in \mathbb{Z}~.
\end{eqnarray*}
Thus, we have proved the relation eq.\eqref{eq:tobeproved_master}, i.e., the expansion of the $N$-partite information in terms of various lower-order multipartite informations.

\section{Multipartite information constraint}
\label{ap:entanglemetn_constraint}

\par\noindent 
Following the discussion in Appendix \eqref{ap:holes_in_css}, eq.\eqref{eq:ap_1246} and the fact that $I^N_{A_N^{n_h}}=0$, we obtain
\begin{eqnarray}
\bigg[\sum_{j} (-1)^{(N-1)-\mu_j} I^{\mu_j}_{\{\mathcal{A}_{\mu_j}\}}\bigg] &+& \bigg[ (-1)^{(N-1)-2} \sum_{\{A_i,A_j\}\in \{M\}} I^2_{A_i,A_j}\bigg]+ (-1)^{N} \bigg[\displaystyle\sum_{i} S_{A_i} - S_{\cup A_i}  \bigg]=0 ~, ~~ \nonumber\\
\Rightarrow \sum_{j} |I^{\mu_j}_{\{\mathcal{A}_{\mu_j}\}}| &=&- \bigg[  \sum_{\{A_i,A_j\}\in \{M\}} I^2_{A_i,A_j}\bigg]+  \bigg[\displaystyle\sum_{i} S_{A_i} - S_{\cup A_i}  \bigg] ~, ~~ \nonumber\\
&=& d_{nn}\log\mathcal{D}+\bigg[ -N\log\mathcal{D}+(n_h+1)\log\mathcal{D} \bigg]=\bigg[d_{nn}-N+n_h+1\bigg]\log\mathcal{D}~, ~\nonumber\\
&=& 2n_h \log\mathcal{D}~. ~ \nonumber
\end{eqnarray}
The above result shows the dependence of the sum $\sum_{j} |I^{\mu_j}_{\{\mathcal{A}_{\mu_j}\}}|$ on the number of holes ($n_{h}$) of the CSS. Thus, we again find evidence for the dependence of the multipartite information measure of a topologically ordered ground state on the topology of the CSS. This can also be proved easily using Appendix \eqref{ap:simpleannular_euler}. For each closed loop, one obtains $|I^{\mu_j}_{\{A_j\}}|=\chi S_{topo} $. Thus, the total contribution arising from $n_{h}$ number of holes is simply
\begin{eqnarray}
\displaystyle\sum_{j=1}^{n_h} |I^{\mu_j}_{\{A_j\}}|=\chi n_h S_{topo}~.
\end{eqnarray}
\end{widetext}

\bibliography{TEE_Bibliography}

\end{document}